\newtheorem{theorem}{Theorem}
\newtheorem{lemma}{Lemma}
\newtheorem{observation}{Observation}
\newcommand{\ignore}[1]{}
\newcommand{\Oh}{\mathcal{O}}
\newcommand{\W}[1]{{\sf W[#1]}}
\let\emptyset\varnothing
\title{Subset Feedback Vertex Set on Graphs of Bounded Independent Set Size\thanks{The second author has been supported by the Hellenic Foundation for Research \& Innovation.}}
\author{
Charis Papadopoulos\footnote{Department of Mathematics, University of Ioannina, Greece. Email: \texttt{charis@cs.uoi.gr}}
\and
Spyridon Tzimas\footnote{Department of Mathematics, University of Ioannina, Greece. Email: \texttt{roytzimas@hotmail.com}.}
}
\date{}
\begin{document}

\maketitle

\begin{abstract}
The (\textsc{Weighted}) \textsc{Subset Feedback Vertex Set} problem is a generalization of the classical \textsc{Feedback Vertex Set} problem and asks for a vertex set of minimum (weighted) size
that intersects all cycles containing a vertex of a predescribed set of vertices.
Although the two problems exhibit different computational complexity on split graphs, no similar characterization is known on other classes of graphs.
Towards the understanding of the complexity difference between the two problems, it is natural to study the importance of a structural graph parameter.  
Here we consider graphs of bounded independent set number for which it is known that \textsc{Weighted Feedback Vertex Set} is solved in polynomial time.
We provide a dichotomy result with respect to the size of a maximum independent set.
In particular we show that \textsc{Weighted Subset Feedback Vertex Set} can be solved in polynomial time for graphs of independent set number at most three,
whereas we prove that the problem remains NP-hard for graphs of independent set number four.
Moreover, we show that the (unweighted) \textsc{Subset Feedback Vertex Set} problem can be solved in polynomial time on graphs of bounded independent set number by giving an algorithm with running time $n^{\Oh(d)}$, where $d$ is the size of a maximum independent set of the input graph.
To complement our results, we demonstrate how our ideas can be extended to other terminal set problems on graphs of bounded independent set size.
Based on our findings for \textsc{Subset Feedback Vertex Set}, we settle the complexity of \textsc{Node Multiway Cut}, a terminal set problem that asks for a vertex set of minimum size that intersects all paths connecting any two terminals, as well as its variants where nodes are weighted and/or the terminals are deletable, for every value of the given independent set number.
\end{abstract}

\section{Introduction}
Given a (vertex-weighted) graph $G=(V,E)$ and a set $S \subseteq V$, the (\textsc{Weighted}) \textsc{Subset Feedback Vertex Set} problem asks for a vertex set of minimum (weighted) size that intersects all cycles containing a vertex of $S$.
It was introduced by Even et al.~who obtained a constant factor approximation algorithm for its weighted version~\cite{EvenNZ00}.
Interestingly \textsc{Subset Feedback Vertex Set} for $|S| = 1$ also coincides with the NP-complete \textsc{Multiway Cut} problem \cite{FominHKPV14} in which the task is to disconnect a predescribed set of vertices \cite{Calinescu08,GargVY04}.
Cygan et al. \cite{CyganPPW13} and Kawarabayashi and Kobayashi \cite{KK12} independently
showed that \textsc{Subset Feedback Vertex Set} is fixed-parameter tractable (FPT) parameterized by the solution size, while Hols and Kratsch provided a randomized polynomial kernel for the problem \cite{HolsK18}.
There has been a considerable amount of work to obtain faster, still exponential-time, algorithms even when restricted to particular graph classes \cite{ChitnisFLMRS13,FominHKPV14,FominGLS16,GolovachHKS14}.

As a generalization of the classical \textsc{Feedback Vertex Set} for which $S=V$, the problem remains NP-hard on bipartite graphs \cite{Yannakakis81a} and planar graphs~\cite{GJ}.
On the positive side, \textsc{Weighted Subset Feedback Vertex Set} can be solved in polynomial time on interval graphs, permutation graphs, and cobipartite graphs \cite{PapadopoulosT17}, the latter being a subclass of graphs of independent set size at most two.
However a notable difference between the two problems regarding their complexity status is the class of split graphs: \textsc{Feedback Vertex Set} is known to be polynomial-time solvable on split graphs \cite{fvs:chord:corneil:1988,Spinrad03},
whereas \textsc{Subset Feedback Vertex Set} remains NP-hard on split graphs \cite{FominHKPV14}.
This gives evidence to the fact that the \textsc{Subset Feedback Vertex Set} seems more difficult to attack than the classical setting of the problem.
Thus it is interesting to explore and obtain further (in)tractability results for \textsc{Subset Feedback Vertex Set}.

Towards such a direction it is reasonable to consider structural parameters of graphs that may lend themselves to provide a unified approach.
In terms of parameterized complexity \textsc{Feedback Vertex Set} is known to be FPT, when parameterized by tree-width \cite{CyganNPPRW11} and clique-width \cite{Bui-XuanSTV13} which implies that \textsc{Feedback Vertex Set} can be solved in polynomial time on graphs of bounded such parameters.
Although \textsc{Feedback Vertex Set} is W[1]-hard parameterized by the size of the independent set\footnote{In Section~\ref{sec:unweightedSFVS} we give a different and simpler reduction from the \textsc{Multicolored Independent Set} problem.}, it can be solved in polynomial time on graphs of bounded maximum induced matching (i.e., \textsc{Feedback Vertex Set} belongs in XP parameterized by the size of the maximum induced matching) \cite{BRV14}.
Only very recently, Jaffke et al. proposed an algorithm that solves \textsc{Weighted Feedback Vertex Set} in time $n^{\Oh(w)}$ where $w$ is the {\it maximum induced matching width} of the given graph \cite{JKTT18}.
Despite their relevant name, graphs of bounded maximum induced matching are not related to graphs of bounded maximum induced matching width as indicated in \cite{Vatschelle2012}.

The approach of \cite{JKTT18} provides a powerful mechanism, as it unifies polynomial-time algorithms for \textsc{Weighted Feedback Vertex Set} on several graph classes such as interval graphs, permutation graphs, circular-arc graphs, and Dilworth-$k$ graphs among others.
Such a mechanism raises the question of whether the algorithm given in \cite{JKTT18} can be extended to the more general setting of \textsc{Weighted Subset Feedback Vertex Set}.
However the proposed algorithm is based on the crucial fact that the forest of a solution has bounded number of internal nodes which is not necessarily true for the $S$-forest of \textsc{Weighted Subset Feedback Vertex Set}.
Thus it seems difficult to control the size of the solution whenever $S \subset V$.
As this observation does not rule out any positive answer, here we develop the first step towards such an approach by considering graphs of bounded independent set number which form candidate relevant graphs.
Notice that graphs of bounded independent set number are not related to graphs of bounded maximum induced matching width.
However graphs of bounded independent set number form the first natural class of bounded structural parameter that are interesting to explore regarding the complexity of \textsc{Subset Feedback Vertex Set}.
Although \textsc{Weighted Feedback Vertex Set} can be solved in time $n^{\Oh(p)}$ on graphs of maximum induced matching at most $p$ \cite{BRV14}, \textsc{Subset Feedback Vertex Set} is already NP-complete on graphs of maximum induced matching equal to one (i.e., split graphs) \cite{FominHKPV14}.

In this work we show that the complexity behaviour of the weighted version of the problem is completely different from the
behaviour of the unweighted variant on graphs with bounded $\alpha(G)$, where $\alpha(G)$ is the size of a maximum independent set in a graph $G$.
\begin{itemize}
\item We show that \textsc{Weighted Subset Feedback Vertex Set} can be solved in polynomial time on graphs with $\alpha(G)\leq 3$.
\end{itemize}
Such graphs consist of the complements of triangle-free graphs; recall that for triangle-free graphs \textsc{Feedback Vertex Set} remains NP-hard \cite{Yannakakis81a}.
Even on such graphs \textsc{Weighted Subset Feedback Vertex Set} requires a structural characterization of the solution with respect to the vertices that are {\it close} to $S$.
\begin{itemize}
\item We further provide a dichotomy result showing that \textsc{Weighted Subset Feedback Vertex Set} remains NP-complete on graphs with $\alpha(G)=4$.
\end{itemize}
Thus we enlarge our knowledge on the complexity difference of the two problems with respect to a structural graph parameter.
\begin{itemize}
\item In order to complement our results we show that \textsc{Subset Feedback Vertex Set} can be solved in time $n^{\Oh(d)}$, where $\alpha(G)\leq d$.
\end{itemize}
Thus we provide a complexity difference between the weighted and the unweighted versions of the problem with respect to a natural structural parameter.
Our main findings concerning \textsc{Subset Feedback Vertex Set} are summarized in Table~\ref{tab:results}.

Moreover, we demonstrate how our ideas can be extended to other {\em terminal set} problems on graphs of bounded independent set size.
In these type of problems we are given a graph $G =(V, E)$, a terminal set $T \subseteq V$, and a nonnegative integer $k$ and the
goal is to find a set $X \subseteq V$ with $|X|\leq k$ which intersects all ``structures'' (such as cycles or paths) passing through the vertices in $T$ \cite{ChitnisFLMRS17}.
The (unweighted) \textsc{Node Multiway Cut} problem is concerned with finding a set $X \subseteq V\setminus T$ of size at most $k$ such that any path between two different terminals intersects $X$.
\textsc{Node Multiway Cut} is known to be in FPT parameterized by the solution size \cite{CLL09,Marx06} and even above guaranteed value \cite{CyganPPW}.
For further results on variants of \textsc{Node Multiway Cut} we refer to \cite{Calinescu08,GargVY04,KratschW12}.
We completely characterize the complexity of \textsc{Node Multiway Cut} with respect to the size of the maximum independent set.
\begin{itemize}
\item In particular, we show that for $\alpha(G)\leq 2$ \textsc{Node Multiway Cut} can be solved in polynomial time,
whereas for $\alpha(G)=3$ it remains NP-complete by adopting the reduction for \textsc{Weighted Subset Feedback Vertex Set} with $\alpha(G)=4$.
\end{itemize}
We further consider a relaxed variation of \textsc{Node Multiway Cut} in which we are allowed to remove terminal vertices,
called \textsc{Node Multiway Cut with Deletable Terminals} (also known as \textsc{Unrestricted Node Multiway Cut}).
\begin{itemize}
\item We show that the (unweighted) \textsc{Node Multiway Cut with Deletable Terminals} problem can be solved in polynomial time on graphs of bounded independent set number,
using an idea similar to the polynomial-time algorithm for the \textsc{Subset Feedback Vertex Set} problem.
\item
We also consider its node-weighted variation and provide a dichotomy complexity result showing that \textsc{Weighted Node Multiway Cut with Deletable Terminals}
can be solved in on graphs with $\alpha(G)\leq 2$, whereas it becomes NP-complete on graphs with $\alpha(G)=3$.
\end{itemize}
It should be noted that the polynomial-time algorithm for the weighted variation is obtained by invoking our algorithm for \textsc{Weighted Subset Feedback Vertex Set}
on graphs with $\alpha(G)\leq 3$. 


\begin{table}[t]
\begin{center}
\setlength\extrarowheight{3pt}
\setlength{\tabcolsep}{4pt}
\begin{tabular}{c | l c c | c|}\cline{2-5}
\multirow{2}{*}{} & \multicolumn{4}{c|}{Bounded Structural Parameter} \\\cline{2-5}
                         & \multicolumn{3}{c|}{Max. Independent Set ($d$)}  & Max. Induced Matching ($p$)\\\hline
\multicolumn{1}{|c|}{Weighted FVS}             & \multicolumn{3}{c}{} & {$n^{\Oh(p)}$ \cite{BRV14}} \\\hline  
\multicolumn{1}{|c|}{\multirow{2}{*}{Weighted SFVS}} &  $d \leq 3$ & $n^{\Oh(1)}$ & Theorem \ref{theo:wsfvspoly} &   \\\cline{2-4} 
\multicolumn{1}{|c|}{}                         &  $d = 4$ & NP-complete & Theorem \ref{theo:wsfvsdNP} &  \\\cline{1-4}
\multicolumn{1}{|c|}{Unweighted SFVS}          &  & {$n^{\Oh(d)}$} & Theorem \ref{theo:sfvspoly} & {NP-complete \cite{FominHKPV14}}\\\hline
\end{tabular}
\end{center}
\caption{Computational complexity results for \textsc{Feedback Vertex Set} (FVS) and \textsc{Subset Feedback Vertex Set} (SFVS) on graphs of bounded independent set number and graphs of bounded maximum induced matching. Note that every graph of independent set number $d$ has maximum induced matching of size at most $d$, while the converse is not necessarily true. }\label{tab:results}
\end{table}

\section{Preliminaries}
We refer to \cite{graph:classes:brandstadt:1999,Diestel12,graph:classes:Go04} for our standard graph terminology.
For $X \subseteq V$, $N_G(X)=\bigcup_{v \in X} N_G(v) \setminus X$ and $N_G[X] = N_G(X) \cup X$.
A {\it weighted graph} $G = (V, E)$ is a graph, where each vertex
$v \in V$ is assigned a {\it weight} that is a positive integer number.
We denote by $w(v)$ the weight of each vertex $v \in V$.
For a vertex set $A \subset V$, the weight of $A$, denoted by $w(A)$, is $\sum_{v \in A} w(v)$.

Given a graph $G$, the {\it independent set number}, denoted by $\alpha(G)$, is the size of the maximum independent set in $G$.
In terms of forbidden subgraph characterization, note that $\alpha(G)\leq d$ if and only if $G$ does not contain $(d+1)K_1$ as an induced subgraph.
We say that a graph $G$ has {\it bounded independent set size} if there exists a positive integer $d$ such that $\alpha(G) \leq d$.
The {\it clique cover number} of $G$, denoted by $\kappa(G)$, is the smallest number of cliques needed to partition $V(G)$ into $S_1, \ldots, S_k$ such that $G[S_i]$ is a clique.
A \emph{vertex cover} is a set of vertices such that every edge of $G$ is incident to at least one vertex of the set.
A \emph{matching} is a set of edges having no common endpoint.
An \emph{induced matching}, denoted by $pK_2$, is a matching $M$ of $p$ edges such that $G[V(M)]$ is isomorphic to $pK_2$.
The \emph{maximum induced matching} number, denoted by $p(G)$, is the largest number of edges in any induced matching of $G$.
It is not difficult to see that for any graph $G$, $\kappa(G) \geq \alpha(G) \geq p(G)$ holds.

Here we consider the following problem.
\vspace*{-0.1in}
\tikzstyle{mybox} = [draw=black, fill=none, very thin,
    rectangle, rounded corners, inner sep=7pt, inner ysep=8pt]
\tikzstyle{fancytitle} =[fill=white, text=black]
\begin{figure}[!h]
\centering
\begin{tikzpicture}
\node [mybox] (box){%
  \noindent
  \begin{tabularx}{0.9\textwidth}{@{\hspace{\parindent}} l X c}
    \textit{Input:} & A (vertex-weighted) graph $G$, a set $S \subseteq V$, and a nonnegative integer $k$. \\[1pt]
    \textit{Task:} & Decide whether there is a set $X \subseteq V$ with $|X| \leq k$ ($w(X) \leq k$) such that no cycle in $G-X$ contains a vertex of $S$. 
  \end{tabularx}};
\node[fancytitle, right=6pt] at (box.north west) {(\textsc{Weighted}) \textsc{Subset Feedback Vertex Set} -- SFVS};
\end{tikzpicture}%
\end{figure}
\vspace*{-0.1in}

\noindent As remarked, we distinguish between the weighted and the unweighted version of the problem.
In the unweighted version of the problem note that all weights are equal and positive.
The classical \textsc{Feedback Vertex Set} (FVS) problem is a special
case of \textsc{Subset Feedback Vertex Set} with $S = V$.
A vertex of $S$ is simply called {\it $S$-vertex}.
%
An induced cycle of $G$ is called {\it $S$-cycle} if an $S$-vertex is contained in the cycle.
We define an \emph{$S$-forest} $F=(V_F,E_F)$ to be the subgraph of $G$ induced by the vertex set $V_F \subseteq V$ for which no cycle in $G[V_F]$ is an $S$-cycle.
It is not difficult to see that the problem of computing a minimum weighted subset feedback vertex set is equivalent to the problem of computing a maximum weighted $S$-forest.

Let us give a couple of observations on the nature of \textsc{Subset Feedback Vertex Set} on graphs of bounded independent set size.
Let $G$ be a graph and let $d$ be a positive integer such that every independent set of $G$ has at most $d$ vertices.
Firstly note that the bounded-size independent set is a hereditary property, meaning that for every induced subgraph $H$ of $G$, we have $\alpha(H) \leq d$.
Moreover for any clique $C$ of $G$, any $S$-forest of $G$ contains at most two vertices of $S \cap C$.

\begin{observation}\label{obs:S2d}
Let $G$ be a graph with $\alpha(G)\leq d$ and let $S \subseteq V$.
\begin{itemize}
\item[(1)] For any set $X$ of $2d+1$ vertices, there is a cycle in $G[X]$.
\item[(2)] Any $S$-forest of $G$ has at most $2d$ vertices from $S$.
\end{itemize}
\end{observation}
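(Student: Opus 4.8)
The plan is to prove part (1) with a short bipartiteness/counting argument and then obtain part (2) as an immediate consequence by picking a witness set of $2d+1$ vertices inside $S \cap V_F$.

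For (1) I would argue by contradiction. Suppose $G[X]$ contains no cycle, i.e.\ $G[X]$ is a forest on $|X| = 2d+1$ vertices. Every forest is bipartite, and a bipartite graph on $n$ vertices has a colour class of size at least $\lceil n/2 \rceil$. With $n = 2d+1$ this yields an independent set of size at least $d+1$ in $G[X]$, hence in $G$, contradicting $\alpha(G) \leq d$. Therefore $G[X]$ must contain a cycle. (Equivalently, one can avoid naming bipartiteness: a forest on $n$ vertices has at most $n-1$ edges, and a greedy deletion argument produces an independent set of size at least $\lceil n/2 \rceil$.)

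For (2), let $F = (V_F, E_F)$ be an $S$-forest and suppose, for contradiction, that $|S \cap V_F| \geq 2d+1$. Fix any subset $X \subseteq S \cap V_F$ with $|X| = 2d+1$. By part (1), $G[X]$ contains a cycle; replacing it if necessary by a shortest cycle of $G[X]$ through one of its vertices, we may assume this cycle is chordless, hence an induced cycle of $G[X]$ and therefore an induced cycle of $G[V_F]$. All of its vertices lie in $X \subseteq S$, so it contains an $S$-vertex, i.e.\ it is an $S$-cycle contained in $G[V_F]$. This contradicts the assumption that $F$ is an $S$-forest. Hence $|S \cap V_F| \leq 2d$.

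I do not anticipate any genuine obstacle here; the only step that needs a moment's care is the passage from an arbitrary cycle to an induced cycle through a prescribed vertex, which is needed so that the object produced literally matches the definition of an $S$-cycle. Everything else is elementary counting.
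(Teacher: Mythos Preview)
Your proof is correct and follows essentially the same approach as the paper: both argue part~(1) via the bipartiteness of forests (a $2$-colouring forces a colour class of size at least $d+1$), and both derive part~(2) by applying~(1) to a set of $2d+1$ vertices inside $S\cap V_F$. You are in fact slightly more explicit than the paper in passing from an arbitrary cycle to an induced one so as to match the definition of $S$-cycle, but this is a minor presentational difference rather than a different route.
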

\begin{proof}
Let $X$ be a set of $2d+1$ vertices.
Assume that $G[X]$ is a forest.
As an induced subgraph of $G$, any independent set of $G[X]$ has size at most $d$.
Since $G[X]$ is acyclic, there is a proper 2-coloring $A,B$ of the vertices of $G[X]$ such that $|A|\geq |B|$.
By the fact that $|A|\leq d$, we conclude that $|A|+|B| \leq 2d$, leading to a contradiction that $|X| \geq 2d+1$.
Thus $G[X]$ contains a cycle.

For the second statement, let $F=(V_F,E_F)$ be an $S$-forest of $G$.
By the first statement, if $F[S]$ has at last $2d+1$ vertices then there is a cycle in $F$ that passes through a vertex of $S$, which implies an $S$-cycle in $F$.
Thus $|S\cap V_F|\leq 2d$.
%
\end{proof}
We note that Observation~\ref{obs:S2d} directly implies that any $2d+1$ vertices of $G[S]$ induce an $S$-cycle, which allows us to construct
by brute force all possible subsets of $S$ belonging to any $S$-forest in time $n^{\Oh(d)}$.

\section{Weighted SFVS on Graphs of Bounded Independent Set}
Here we consider the \textsc{Weighted Subset Feedback Vertex Set} and we show a dichotomy result with respect to the size of the maximum independent set.
We first provide a polynomial-time algorithm on graphs of independent set size at most three and then we show that \textsc{Weighted Subset Feedback Vertex Set} is NP-complete on graphs of independent set size at most four.

Let $(G,S,k,d)$ be an instance of \textsc{Weighted Subset Feedback Vertex Set}
for which $G$ is a graph of independent set size at most $d$.
In the forthcoming arguments, instead of directly computing a solution for \textsc{Weighted Subset Feedback Vertex Set},
we consider the equivalent problem of computing an $S$-forest of $G$ having weight at least $w(V)-k$.

Let $F=(V_F,E_F)$ be an $S$-forest of $G$. 
We partition the $S$-forest $F$ into two induced subgraphs $F_{\leq 1}$ and $F_{>1}$ as follows:
\begin{itemize}
\item $F_{\leq 1}$ is the subgraph of $F$ induced by the vertices of $N[S \cap V_F]$; the vertices of $F_{\leq 1}$ are at distance at most one from $S \cap V_F$ and are denoted by $S_{\leq 1}$.
\item $F_{>1}$ is the graph $F - S_{\leq 1}$ and contains vertices that are at distance at least two from $S \cap V_F$.
\end{itemize}
Such a partition is called {\it $S$-distance partition} of $F$, denoted by $(F_{\leq 1}, F_{>1})$.
The set of edges of $F$ having one endpoint in $F_{\leq 1}$ and the other in $F_{>1}$ are called {\it the cut} with respect to $F_{\leq 1}$ and $F_{>1}$.
Notice that a vertex of $F_{\leq 1}$ that is adjacent to a vertex of $F_{>1}$ belongs to $S_{\leq 1} \setminus S$.

Let $(C_1, \ldots, C_{d'})$ be a partition of the vertices of $F_{>1}$ such that each $C_i$ induces a connected component in $F_{>1}$.
Because $F_{>1}$ is an induced subgraph of $G$, it is clear that $d' \leq d$.
Let $(A_{1},\ldots,A_{d'})$ be a tuple of $d'$ subsets of $S_{\leq 1} \setminus S$, i.e., each $A_i \subseteq (S_{\leq 1} \setminus S)$ holds.
We say that {\it the cut satisfies} the tuple $(A_{1},\ldots,A_{d'})$ if for any vertex $v \in C_i$, we have $(N_G(v) \cap S_{\leq 1}) \subseteq A_i$.
The notion of an $S$-distance partition of $F$ with the corresponding cut is illustrated in Figure~\ref{S-forest-struct}.

\begin{figure}[t]
\centering
\includegraphics[scale= 1.0]{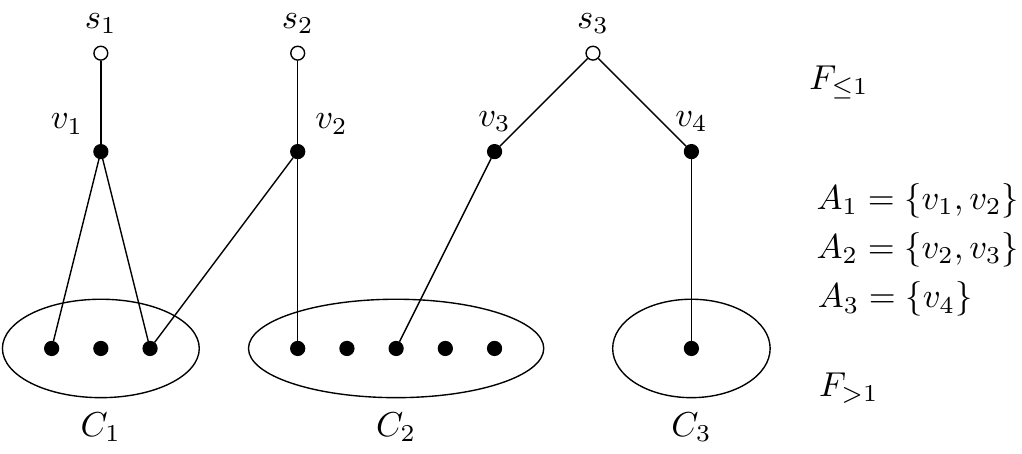}
\caption{Illustrating an $S$-distance partition $(F_{\leq 1}, F_{>1})$ of an $S$-forest $F$ with $S=\{s_1,s_2,s_3\}$ that shows the connected components $C_1, C_2, C_3$ of $F_{>1}$. The edges inside $F_{>1}$ are not drawn in order to highlight that the cut satisfies the given tuple $(A_1,A_2,A_3)$.}
\label{S-forest-struct}
\end{figure}

We now utilize the $S$-distance partition of $F$ in order to construct an algorithm that solves \textsc{Weighted Subset Feedback Vertex Set}
on graphs of independent set size at most $d$ and subsequently show that this algorithm is efficient for $d \leq 3$.
Our general approach relies on the following facts:
\begin{itemize}
\item By Observation~\ref{obs:S2d}~(2) we try all subsets $S'$ of $S$ with at most $2d$ vertices and keep those sets that induce an $S$-forest. 
This step is responsible for constructing the graph $F_{\leq 1}$.
    We will show that the number of the produced such subsets is bounded by $n^{\Oh(d)}$.

\item For each of the potential subsets $S'$ constructed in the previous step, and for each $d'\leq d$, we determine all possible tuples $(A_{1},\ldots,A_{d'})$ in $F_{\leq 1}$ 
with $A_i \subseteq (S' \setminus S)$ that are only satisfied by cuts of $S$-distance partitions of induced subgraphs of $G$ with $S_{\leq 1}=S'$ that are $S$-forests. 
We show why considering only these tuples is sufficient in Lemma \ref{lem:correct}.

\item Up to that point we can show that all steps can be executed in polynomial time regardless of $d\leq 3$.
However for the next and final step we can only achieve polynomial running time if we restrict ourselves to $d\leq 3$ due to the number of connected components of $F_{>1}$.
Then for each tuple selected in the previous step we find connected components $C_{1}, \ldots, C_{d'}$
	of maximum weight such that the cut of $(G[S'],G[C_{1}\cup\cdots\cup C_{d'}])$ satisfies the tuple.
\end{itemize}

We begin by showing that the $S$-distance partition of $F$ provides a useful tool towards computing a maximum $S$-forest.
Given a set of vertices $X \subseteq N[S]$ and $d'$ subsets $A_i$ of $X \setminus S$,
we construct the graph $\widehat{G}$ that is obtained from $G[X]$ by adding $d'$ vertices $w_1, \ldots, w_{d'}$ such that every vertex $w_i$ is adjacent to all the vertices of $A_i$.
In what follows, we always assume that $G$ is a graph having independent set size at most $d$.

\begin{lemma}\label{lem:correct}
Let $F$ be an $S$-forest of $G$ with $S$-distance partition $(F_{\leq 1}, F_{>1})$ such that $S_{\leq 1} \cap S \neq \emptyset$.
Then there is a tuple $(A_1, \ldots, A_{d'})$ with $A_i \subseteq (S_{\leq 1} \setminus S)$ such that
\begin{itemize}
\item[(i)] the cut of $(F_{\leq 1}, F_{>1})$ satisfies $(A_1, \ldots, A_{d'})$ and
\item[(ii)] every induced subgraph $H$ of $G$ with $S$-distance partition $(H[S_{\leq 1}],H-S_{\leq 1})$ that satisfies $(A_1, \ldots, A_{d'})$ is an $S$-forest.
\end{itemize}
\end{lemma}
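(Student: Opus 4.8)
The plan is to prove part (i) first, which is essentially definitional, and then concentrate on part (ii), which is the substantive claim. For (i), let $C_1, \ldots, C_{d'}$ be the connected components of $F_{>1}$; this is exactly the partition referred to before the lemma, and $d' \le d$ since $F_{>1}$ is an induced subgraph of $G$. For each $i$, simply \emph{define} $A_i := \bigcup_{v \in C_i} \bigl(N_G(v) \cap S_{\leq 1}\bigr)$. Then $A_i \subseteq S_{\leq 1}$ by construction, and in fact $A_i \subseteq S_{\leq 1}\setminus S$, because any vertex of $F_{\leq 1}$ adjacent in $F$ to a vertex of $F_{>1}$ lies in $S_{\leq 1}\setminus S$ (this is the observation made right after the definition of the $S$-distance partition: a vertex of $S$ at distance at most one from itself whose neighborhood reaches distance-$\geq 2$ vertices would create a short $S$-path, and more directly, neighbors of $S$-vertices are by definition in $S_{\leq 1}$, so the edge would be internal to $F_{\leq 1}$). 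With this choice, the cut of $(F_{\leq 1}, F_{>1})$ satisfies $(A_1, \ldots, A_{d'})$ essentially by definition of ``satisfies.'' A small point to check: the edges of $F$ in the cut are precisely edges of $G$ between $C_i$ and $S_{\leq 1}$, and we need that no edge of $G$ between $C_i$ and $S_{\leq 1}$ is \emph{missing} from $F$ in a way that matters — but since $F$ is an induced subgraph of $G$ and both endpoints lie in $V_F$, every such edge of $G$ is present in $F$, so $N_G(v)\cap S_{\leq 1} = N_F(v)\cap S_{\leq 1} \subseteq A_i$ for $v \in C_i$. Good.

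The heart of the matter is (ii). Let $H$ be any induced subgraph of $G$ with $S$-distance partition $(H[S_{\leq 1}], H - S_{\leq 1})$ satisfying $(A_1,\ldots,A_{d'})$; I must show $H$ has no $S$-cycle. Suppose for contradiction that $Z$ is an induced $S$-cycle in $H$, say passing through an $S$-vertex $s$. Split $Z$ into maximal arcs lying in $S_{\leq 1}$ and maximal arcs lying in $V(H)\setminus S_{\leq 1}$. Each ``outside'' arc is contained in a single connected component of $H - S_{\leq 1}$ (being connected), hence in a single $C_j$ by the hypothesis that the partition structure of $H$ is governed by the same components $C_1,\ldots,C_{d'}$ — here I would use that $H - S_{\leq 1}$ satisfies the tuple, so each of its components is contained in one $C_i$, and the attachment points of that arc to the rest of $Z$ lie in $A_i$. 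The idea is then to \emph{contract} each outside arc (which attaches to $S_{\leq 1}$ only through vertices of the corresponding $A_i$) and replace it by the apex vertex $w_i$ of the auxiliary graph $\widehat{G}$ built from $X = S_{\leq 1}$ and the sets $A_i$; this yields a closed walk in $\widehat{G}$ through $s$, and after cleaning up repetitions one extracts a cycle through $s$. But $G[S_{\leq 1}] = F_{\leq 1}$, being the vertex set of an $S$-forest restricted to $N[S\cap V_F]$, contains no $S$-cycle — and crucially we can argue $\widehat{G}$ itself contains no $S$-cycle through any original vertex, because the apices $w_i$ are pairwise nonadjacent and each has neighborhood $A_i \subseteq S_{\leq 1}\setminus S$, so any cycle through $s$ in $\widehat{G}$ using some $w_i$ descends to a cycle through $s$ in a graph that we have pre-selected (in the algorithm) to be $S$-forest-like. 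This is exactly the role of $\widehat G$ flagged in the paragraph before the lemma: the tuples we keep are only those for which $\widehat G$ has no $S$-cycle, equivalently those realizable by genuine $S$-forests.

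The main obstacle I anticipate is making the contraction argument rigorous: an $S$-cycle in $H$ need not alternate neatly between inside and outside, an outside arc might be a single vertex or might re-enter the same $A_i$ at both ends (potentially the \emph{same} vertex of $A_i$, or two vertices with an edge between them in $S_{\leq 1}$), and one must ensure the resulting object in $\widehat G$ is still a genuine cycle (simple, length $\ge 3$) through $s$ and not a degenerate walk that collapses. I would handle this by choosing $Z$ to be a \emph{shortest} $S$-cycle in $H$, which forces each outside arc to be an induced path whose only neighbors in $S_{\leq 1}$ are its two endpoints' neighbors (an internal vertex of the arc adjacent to $S_{\leq 1}$ would contradict the arc being outside, and chords would contradict minimality), and then verify that distinct outside arcs attach to distinct $w_i$'s or, if to the same $w_i$, that the portion of $Z$ between them through $S_{\leq 1}$ together with $w_i$ still closes a cycle — yielding an $S$-cycle in $\widehat G$ and the desired contradiction with the selection criterion for the tuple. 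The symmetric easy case, where $Z$ lies entirely in $S_{\leq 1}$, is immediate since $H[S_{\leq 1}]$ is an induced subgraph of $F_{\leq 1}$, which has no $S$-cycle.
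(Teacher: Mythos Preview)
Your overall architecture matches the paper's: define $A_i := N_G(C_i)\cap S_{\leq 1}$ from the components $C_i$ of $F_{>1}$, verify (i), and for (ii) contract outside arcs of a hypothetical $S$-cycle in $H$ to the apex vertices $w_i$ of $\widehat{G}$. The paper does exactly this two-step reduction $H \rightsquigarrow \widehat{G} \rightsquigarrow F$.

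There is, however, a genuine gap at the point where you need that $\widehat{G}$ itself is an $S$-forest. You write that ``any cycle through $s$ in $\widehat{G}$ using some $w_i$ descends to a cycle through $s$ in a graph that we have pre-selected (in the algorithm) to be $S$-forest-like'' and that ``the tuples we keep are only those for which $\widehat{G}$ has no $S$-cycle.'' This is circular: the lemma is a statement about $G$ and $F$ alone, and it is what \emph{justifies} the algorithm's filtering of tuples, not the other way around. At this stage nothing has been pre-selected; you must \emph{prove} from the hypothesis that $F$ is an $S$-forest that $\widehat{G}$ is one too.

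The missing argument is the reverse of your contraction step, and the paper carries it out explicitly. Given an $S$-cycle $\widehat{C}$ in $\widehat{G}$, each occurrence of an apex $w_i$ on $\widehat{C}$ has two neighbours $u_i,v_i\in A_i = N_G(C_i)\cap S_{\leq 1}$. Since $C_i$ is connected and both $u_i$ and $v_i$ have neighbours in $C_i$, one replaces $w_i$ by a $u_i$--$v_i$ path with interior in $C_i$; doing this for every apex on $\widehat{C}$ produces a closed walk in $F$ through the same $S$-vertex, hence an $S$-cycle in $F$, contradicting that $F$ is an $S$-forest. Once this is in place, your contraction from $H$ to $\widehat{G}$ finishes (ii). Your worries about the contraction collapsing (same $w_i$ hit twice, degenerate walks) are legitimate, but note that the paper treats them at the same informal level; a shortest-$S$-cycle choice as you suggest is a reasonable way to tidy this up.
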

\begin{proof}
Let $(C_1, \ldots, C_{d'})$ be a partition of the vertices of $F_{>1}$ such that every $C_i$ induces a connected component in $F_{>1}$.
We define a tuple $(A_1, \ldots, A_{d'})$ in which every $A_i = N(C_i) \cap S_{\leq 1}$.
Clearly $A_i \subseteq (S_{\leq 1} \setminus S)$ since every vertex $F_{>1}$ is at distance at least two from $S_{\leq 1} \cap S$.
Thus, by construction, the cut of $(F_{\leq 1}, F_{>1})$ satisfies the tuple $(A_1, \ldots, A_{d'})$.

For the next claim we first show that $\widehat{G}$ with respect to ${S_{\leq 1}}$ and the tuple $(A_1, \ldots, A_{d'})$ is an $S$-forest.
Assume for contradiction that there is an $S$-cycle $\widehat{C}$ in $\widehat{G}$.
Since $F_{\leq 1}$ does not contain any $S$-cycle, $\widehat{C}$ contains a vertex $w_i$ and at least two vertices $u_i, v_i$ from $A_i$.
By the fact that $A_i = N(C_i) \cap S_{\leq 1}$, there is a vertex $x$ in $C_i$ of $F_{>1}$ that is adjacent to $u_i$ and
there is a vertex $y$ in $C_i$ of $F_{>1}$ that is adjacent to $v_i$.
Together with a path between $x$ and $y$ in the connected component $C_i$, we construct a path in $G$ with endvertices $u_i$ and $v_i$ that is completely contained in $C_i$.
This means that if we replace every vertex $w_i$ of $\widehat{G}$ then
we obtain an $S$-cycle in $F$, leading to a contradiction.
Thus $\widehat{G}$ is an $S$-forest.

Let $H$ be an induced subgraph of $G$ with $S$-distance partition $(H[S_{\leq 1}],H-S_{\leq 1})$ that satisfies $(A_1, \ldots, A_{d'})$.
Observe that $H[S_{\leq 1}]=F_{\leq 1}$ as they are induced subgraphs of the same vertex set of $G$.
Thus $H[S_{\leq 1}]$ does not contain any $S$-cycle, because $F$ is an $S$-forest.
Since the cut of $(H[S_{\leq 1}],H-S_{\leq 1})$ satisfies $(A_1, \ldots, A_{d'})$, there is a partition $(T_1, \ldots T_{d'})$ in $H-S_{\leq 1}$ such that
$T_i$ is a connected component of $H-S_{\leq 1}$
and $N(T_i) \subseteq A_i$.
We show that $H$ is indeed an $S$-forest. For contradiction, assume an $S$-cycle $C$ in $H$.
There are no $S$-cycles in $H[S_{\leq 1}]$ which implies that $C \cap T_i \neq \emptyset$ for some $1 \leq i \leq d'$.
For every such set we replace the part $C \cap T_i$ by a vertex $w'_i$.
Denote by $H'$ the resulting graph.
Notice that $H'[C]$ is a subgraph of $\widehat{G}[C]$ because $N_{H'}(w'_i) \subseteq N_{\widehat{G}}(w_i)$.
This, however, implies an $S$-cycle in $\widehat{G}$ which gives the desired contradiction.
Therefore $H$ is an $S$-forest.
\end{proof}

Next we show how to bound the $S_{\leq 1}$ vertices of $F_{\leq 1}$.

\begin{lemma}\label{closed-N-obs}
Let $F$ be an $S$-forest of $G$ with $S$-distance partition $(F_{\leq 1}, F_{>1})$ such that $S_{\leq 1} \cap S \neq \emptyset$.
\begin{enumerate}
\item If $|S_{\leq 1} \cap S| \leq 2d-2$ then $|S_{\leq 1}| \leq 4d-2$.
\item If $|S_{\leq 1} \cap S| \geq 2d-1$ then $|S_{\leq 1}| \leq 2d$.
\end{enumerate}
\end{lemma}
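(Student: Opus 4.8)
The plan is to handle the two statements separately: statement~2 follows quickly from Observation~\ref{obs:S2d}, while statement~1 needs a contraction argument. Throughout, write $S' = S_{\leq 1}\cap S$ and $U = S_{\leq 1}\setminus S$, so $S_{\leq 1}$ is the disjoint union of $S'$ and $U$; set $t = |S'|$. By Observation~\ref{obs:S2d}~(2) we have $t\leq 2d$, and $t\geq 1$ by hypothesis. Every $u\in U$ has a neighbour in $S'$; write $P(u) = N_G(u)\cap S'$.

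For statement~2 (so $t\geq 2d-1$) I would argue by contradiction: if $|S_{\leq 1}|\geq 2d+1$, pick any $W\subseteq U$ with $|W| = 2d+1-t$, which equals $1$ or $2$. Then $S'\cup W$ is a set of $2d+1$ vertices of $V_F$, so by Observation~\ref{obs:S2d}~(1) the induced subgraph $F[S'\cup W]$ contains a cycle $C$. Since $|W|\leq 2$ and $C$ has at least three vertices, $C$ meets $S'$; moreover, if $C$ meets $S'$ in exactly one vertex then $C$ must be a triangle consisting of that vertex of $S'$ together with two adjacent vertices of $W$. In every case $C$ passes through a vertex of $S$, so $C$ is an $S$-cycle of $F$, contradicting that $F$ is an $S$-forest. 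Hence $|S_{\leq 1}|\leq 2d$.

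For statement~1 (so $t\leq 2d-2$) the plan is to prove the sharper bound $|U|\leq 2d-1$, which gives $|S_{\leq 1}| = t+|U|\leq (2d-2)+(2d-1)$. Let $D_1,\dots,D_c$ be the connected components of $F[U]$ and put $Z_i = N_G(D_i)\cap S'$. First, choosing one vertex from each $D_i$ gives an independent set of $G$, so $c\leq d$; likewise $F[S']$ is a forest (a cycle in $F[S']$ would be an $S$-cycle), and choosing one vertex from each of its components shows it has $\gamma\leq d$ components, hence $|E(F[S'])| = t-\gamma$. Next, for a fixed $i$, if two distinct vertices $u,u'\in D_i$ had a common neighbour $s\in S'$, then $s$ together with a $u$–$u'$ path inside the connected graph $D_i$ would form an $S$-cycle; thus the sets $P(u)$ for $u\in D_i$ are pairwise disjoint, so $|D_i|\leq\sum_{u\in D_i}|P(u)| = |Z_i|$ and each $s\in S'$ has at most one neighbour in $D_i$. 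The key step is to form the graph $H$ obtained from $F_{\leq 1}$ by contracting each $D_i$ to a single vertex $v_i$, and to show that $H$ is a forest: the $v_i$ are pairwise non-adjacent, so any cycle of $H$ contains at least two vertices of $S'$, and replacing each $v_i$ on such a cycle by a suitable path through $D_i$ (joining a neighbour of one flanking $S'$-vertex to a neighbour of the other) lifts it to a cycle of $F$ through at least two vertices of $S$, i.e. an $S$-cycle, which is impossible. Now $H$ is a simple graph in which $v_i$ has degree exactly $|Z_i|$ and no two $v_i$ are adjacent, so $|E(H)| = |E(F[S'])| + \sum_i|Z_i|$; since $H$ is a forest on $t+c$ vertices, $|E(H)|\leq t+c-1$, whence $\sum_i|Z_i|\leq \gamma + c - 1\leq 2d-1$. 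Combining, $|U| = \sum_i|D_i|\leq\sum_i|Z_i|\leq 2d-1$, as required.

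The routine parts are the edge count in $H$ and the independent-set bounds $c,\gamma\leq d$; the one step requiring care is verifying that $H$ is a forest — that is, checking the lifted closed walk is a genuine cycle (distinct vertices, using that distinct $D_i$ are vertex-disjoint and mutually non-adjacent, and that each flanking $S'$-vertex really has a neighbour in the corresponding $D_i$) and that it meets $S$ in at least two vertices. The degenerate possibilities (a $D_i$ being a single vertex, a flanking pair coinciding, $S'=\emptyset$) only make the estimates easier and should be dispatched separately.
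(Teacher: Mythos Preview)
Your proof is correct. For statement~2 you argue exactly as the paper does. For statement~1 you take a genuinely different route.

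The paper assumes $|U|>4d-t-2$ and extracts a matching $M$ of size $d$ inside $F[U]$ by repeatedly applying Observation~\ref{obs:S2d}~(1), then forms an auxiliary bipartite graph whose two sides are the connected components of $F[S']$ and the edges of $M$; since the matching side has at least as many vertices as the component side and every matching-vertex has degree $\geq 2$, this bipartite graph contains a cycle, which unfolds to an $S$-cycle in $F$.

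Your argument instead contracts each component $D_i$ of $F[U]$ to a single vertex $v_i$ and counts edges in the resulting forest $H$ on $t+c$ vertices: from $|E(H)|=(t-\gamma)+\sum_i|Z_i|\leq t+c-1$ and the independent-set bounds $c,\gamma\leq d$ you get $\sum_i|Z_i|\leq 2d-1$, and then $|U|=\sum_i|D_i|\leq\sum_i|Z_i|$ via the disjointness of the sets $P(u)$. This is cleaner than the paper's matching/bipartite argument, avoids the somewhat informal ``iteratively picking edges'' step, and in fact yields the sharper bound $|S_{\leq 1}|\leq t+(2d-1)\leq 4d-3$. The two ingredients you flag as needing care---simplicity of $H$ (each $s\in S'$ has at most one neighbour in each $D_i$) and that any cycle of $H$ lifts to a genuine $S$-cycle of $F$---are exactly the right places to be careful, and your sketch handles them correctly.
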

\begin{proof}
Let $F$ be such an $S$-forest of $G$ with $|S_{\leq 1} \cap S|\geq1$.
By Observation~\ref{obs:S2d}~(2), we know that $|S_{\leq 1} \cap S|\leq 2d$.
To ease the presentation, we let $S'=S_{\leq 1} \setminus S$.
We consider separately the two cases of the claim.

\medskip

\noindent \textrm{Case 1.}
Let $1\leq |S_{\leq 1} \cap S| \leq 2d-2$. 
Assume for contradiction that $|S'|>4d-|S_{\leq 1} \cap S|-2$.
We show that $F[S']$ contains a matching with at least $d$ edges.
Observe that $|S'|+|S_{\leq 1} \cap S|>4d-2$.
Applying Observation~\ref{obs:S2d}~(1) shows that there is a cycle $C$ in $F[S_{\leq 1}]$.
Since $F$ is an $S$-forest, this is not an $S$-cycle, so all vertices contained in $C$ are vertices of $S'$.
Iteratively picking the two endpoints of an edge from $C$ as long as $|S'|+|S_{\leq 1} \cap S|> 2d$, constructs $d$ edges of $S'$ having no common endpoints.
Thus $F[S']$ contains a matching $M$ with at least $d$ edges.

Let $C_1, \ldots, C_{d'}$ be the connected components of $F[S_{\leq 1} \cap S]$.
Notice that $d'\leq d$ because $F[S_{\leq 1} \cap S]$ is an induced subgraph of a graph with maximum independent set size at most $d$.
By construction, every vertex of $S'$ is adjacent to at least one vertex of $S_{\leq 1} \cap S$.
If the endpoints of an edge of $M$ in $S'$ are adjacent to vertices of the same component $C_i$ then there is an $S$-cycle in $F$ since every vertex of $C_i$ belongs to $S$.
Thus the endpoints of every edge of $M$ are adjacent to different connected components of $F[S_{\leq 1} \cap S]$.
Now obtain a bipartite graph by contracting every component $C_i$ into a single vertex and every edge of $M$ into a single vertex and keep only the adjacencies between the components and the edges of $M$.
Let $(A,B)$ be the bipartition of the resulting bipartite graph such that $A$ contains the components of $F[S_{\leq 1} \cap S]$ and $B$ contains the edges of $M$.
Since $|A| \leq |B|$ and every vertex of $B$ is adjacent to at least two vertices of $A$, there is a cycle in the bipartite graph.
Then it is not difficult to see that the cycle of the contracted vertices corresponds to an $S$-cycle in $F$.
Therefore there is an $S$-cycle in an $S$-forest, leading to a contradiction.

\medskip

\noindent \textrm{Case 2.}
Let $2d-1 \leq|S_{\leq 1} \cap S| \leq 2d$.
Assume for contradiction that $|S'|>2d-|S_{\leq 1} \cap S|$.
This means that $S'$ contains at least one vertex.
We pick a nonempty subset $W$ of $S'$ as follows.
If $|S_{\leq 1} \cap S| = 2d-1$ then $W$ consists of any two vertices of $S'$.
If $|S_{\leq 1} \cap S| = 2d$ then $W$ consists of an arbitrary vertex of $S'$.
In both cases, notice that $|S_{\leq 1} \cap S|+|W|>2d$ by the fact $2d-1 \leq|S_{\leq 1} \cap S|$.
Then Observation~\ref{obs:S2d}~(1) implies that there is a cycle in $F[(S_{\leq 1} \cap S) \cup W]$.
Since $W$ has at most two vertices, we conclude that the induced cycle of $F[(S_{\leq 1} \cap S) \cup W]$ has at least one vertex from $S$, hence it is an $S$-cycle in $F$.
Therefore we reach a contradiction which implies that $|S'|\leq 2d-|S_{\leq 1} \cap S|$.
\end{proof}

Lemma~\ref{closed-N-obs} shows that we can compute all possible  candidates for $S_{\leq 1}$ in polynomial time as follows.
\begin{itemize}
\item We first try by brute force all subsets $S'$ of $S$ having at most $2d$ vertices, according to Observation~\ref{obs:S2d}~(2).
\item Then, for each such subset $S'$ we incorporate the neighbors $N(S')$ of $S'$ for which either $|N(S')|+|S'| \leq 4d-2$ or $|N(S')|+|S'| \leq 2d$ according to Lemma~\ref{closed-N-obs}.
\item Given the described sets $S'$ and $N(S')$, we check if $G[S'\cup N(S')]$ induces an $S$-forest and, if so, we include them into a list $L_1$ containing all candidates for $S_{\leq 1}$.
\end{itemize}
The correctness follows from Observation~\ref{obs:S2d} and Lemma~\ref{closed-N-obs}.
Regarding the running time notice that we create at most $n^{\Oh(d)}$ subsets for each of $S'$ and $N(S')$.
Thus in time $n^{\Oh(d)}$ we can compute a list $L_1$ that contains all possible subsets of the vertices corresponding to $S_{\leq 1}$.
Notice that such vertices are enough to build the part $F_{\leq 1}$.

Let $S_{\leq 1}$ be a set of $L_1$.
We now focus on the graph $G'=G-(S_{\leq 1} \cup S)$ that contains the vertices that are at distance at least two from $S_{\leq 1} \cap S$.
Notice that all possible vertices in an $S$-forest that are in distance at most one from $S$ are described in $L_1$.
Let $d'$ be the number of connected components of $G'$.
It is clear that $d'\leq d$.
In fact, if $S_{\leq 1}\cap S$ contains at least one vertex then $d' < d$, since the vertices of $G'$ are at distance at least two from $S$.
Moreover, observe that if $S_{\leq 1} \cap S=\emptyset$ then $G-S$ is a trivial solution.
From now on, we assume that $|S_{\leq 1} \cap S|\geq 1$ so that $d' <d$.

By brute force, we find all tuples $(A_1, \ldots, A_{d'})$ such that the following hold: 
\begin{itemize}
\item[(i)] $A_i \subseteq (S_{\leq 1} \setminus S)$ and
\item[(ii)] the graph $\widehat{G}$ with respect to $S_{\leq 1}$ and $(A_1, \ldots, A_{d'})$ is an $S$-forest.
\end{itemize}
Notice that by Lemma~\ref{lem:correct} it is sufficient to consider only such tuples.
Since $A_i \subseteq S_{\leq 1}$, $d'<d$, and $|S_{\leq 1}|\leq 4d$, the number of tuples is $2^{\Oh(d)}$, so that we can obtain the desired set of tuples that satisfy both conditions in polynomial time. 

In what follows, we consider the case for $d\leq 3$.
By the previous arguments we are given a set $S_{\leq 1} \subseteq N[S]$ and tuples of the form $A_1$ or $(A_1,A_2)$ which are subsets of $S_{\leq 1} \setminus S$.
Our task is to compute a subset $V'$ of the vertices of $G'$ such that the vertices of $S_{\leq 1} \cup V'$ induce a maximum $S$-forest and the cut $(G[S_{\leq 1}],G[V'])$ satisfies $A_1$ or $(A_1,A_2)$, respectively.
We distinguish the two cases.

\begin{lemma}\label{lem:caseA1}
Let $X \subseteq N[S]$ and let $A_1$ be a subset of $X \setminus S$ such that both $F_{\leq 1}=G[X]$ and $\widehat{G}$ with respect to $X$ and $A_1$ are $S$-forests. 
There exists a polynomial-time algorithm that computes a maximum $S$-forest $F$ with an $S$-distance partition $(F_{\leq 1}, F_{>1})$ having a cut satisfying $A_1$.
\end{lemma}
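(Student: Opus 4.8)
The plan is to reduce the task to a single maximum-weight computation on the graph $\widehat{G}$ built from $X$ and $A_1$. Recall that in this case $d'=1$, so the part $F_{>1}$ of any valid $S$-distance partition consists of a single connected component $C_1$ of $G' = G - X$ (more precisely, $G - (X\cup S)$, but since $X\subseteq N[S]$ these coincide up to the irrelevant $S$-vertices outside $N[S]$ which cannot help), whose neighborhood into $X$ is contained in $A_1$. So what we must do is: among all connected subsets $C$ of $G-X$ with $N_G(C)\cap X \subseteq A_1$, find one of maximum weight, and output $F := G[X\cup C]$ (or $G[X]$ itself if no such $C$ gives positive gain). By Lemma~\ref{lem:correct}(ii), since $\widehat{G}$ with respect to $X$ and $A_1$ is an $S$-forest, every induced subgraph $H$ of $G$ whose $S$-distance partition satisfies $A_1$ is an $S$-forest; in particular $G[X\cup C]$ is an $S$-forest for every such $C$, so the output is always feasible, and conversely every maximum $S$-forest $F$ whose $S$-distance partition has cut satisfying $A_1$ has $F_{>1}$ equal to such a component. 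Hence correctness reduces entirely to the combinatorial claim that we can find a maximum-weight connected set $C$ in $G-X$ with $N_G(C)\cap X\subseteq A_1$.

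The key step is this last computation. First I would delete from $G-X$ every vertex $v$ with $(N_G(v)\cap X)\not\subseteq A_1$: such a vertex can never lie in a valid $C$, and after deleting all of them, \emph{any} connected subset of the remaining graph automatically satisfies the neighborhood constraint (a vertex's forbidden neighbors are only in $X$, which we have already screened). Call the remaining graph $G''$. Now the problem is simply: find a connected subgraph of $G''$ of maximum total weight. Since all weights are positive, a maximum-weight connected subgraph is obtained by taking an entire connected component of $G''$; so we just compute the connected components of $G''$ and pick the one of largest total weight (comparing against the empty gain). This is clearly polynomial. Combining, the algorithm is: enumerate nothing further — given $X$ and $A_1$ as input, form $G''$ as above, take its heaviest connected component $C$, and return $G[X\cup C]$.

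It remains to argue optimality. Any $S$-forest $F$ with $S$-distance partition $(F_{\leq 1},F_{>1})$ whose cut satisfies $A_1$ has $F_{\leq 1}=G[X]$ (both are induced on the same vertex set, namely $N[S\cap V_F] = X$, by the normalization that $S_{\leq1}=X$), and $F_{>1}$ is an induced subgraph of $G-X$ every vertex of which has its $X$-neighborhood inside $A_1$, hence $V(F_{>1})\subseteq V(G'')$; since $F_{>1}$ for $d'=1$ is connected, $V(F_{>1})$ lies inside a single connected component of $G''$, so $w(V(F))\le w(X)+\max\{0,\max_{\text{comp. }C\text{ of }G''} w(C)\}$, which is exactly the weight of the $S$-forest our algorithm returns. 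The only subtlety to be careful about — the main obstacle, such as it is — is the bookkeeping around which $S$-vertices lie outside $X=N[S\cap V_F]$ and the observation that they are useless: an $S$-vertex $s\notin X$ would, if added to the forest, either be isolated (contributing nothing beyond its own weight, and such a case is handled by simply also trying $X\cup\{s\}$ for isolated $S$-vertices, or more cleanly by noting the statement fixes $S_{\leq1}=X$ so the forest we compare against already has this structure) or create a path to $X$ that, together with $F_{>1}$, would change the $S$-distance partition. Under the hypothesis of the lemma, which fixes $F_{\leq1}=G[X]$ exactly, this case does not arise, and the argument above is complete. The whole procedure runs in time polynomial in $n$, independent of $d$.
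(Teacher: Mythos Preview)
Your proposal is correct and follows essentially the same approach as the paper: both arguments restrict attention to the vertices of $V\setminus (X\cup S)$ whose neighbourhood in $X$ lies inside $A_1$ (the paper calls this set $B_1$, you call the induced subgraph $G''$), and then simply return the heaviest connected component of the resulting graph, invoking Lemma~\ref{lem:correct} for feasibility. The only cosmetic difference is that you start from $G-X$ and handle the stray $S$-vertices by an explicit remark, whereas the paper removes $S$ up front in the definition of $B_1$; the underlying algorithm and its justification are identical.
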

\begin{proof}
Since $F_{\leq 1}$ is a fixed $S$-forest of $F$, we need to determine the vertices of $V\setminus(X\cup S)$ that are included in $F_{>1}$.
By the desired cut of $(F_{\leq 1}, F_{>1})$ we are restricted to the vertices of $V\setminus(X\cup S)$ that have neighbors only to $A_1$.
Those vertices can be described as follows:
$$
B_1=(V\setminus (X\cup S))\setminus\left\{w \in V: N(w) \cap (X \setminus (S \cup A_1)) \neq \emptyset\right\}.
$$
Notice that $B_1$ contains vertices that are at distance at least two from the $S$-vertices of $X \cap S$.
Since the cut satisfies a single subset $A_1$, we have at most one connected component of $G[B_1]$ in $F_{> 1}$.
In order to choose the correct connected component of $G[B_1]$, we try to include each of them in $F_{> 1}$ and select the one having the maximum total weight.
Notice that adding any component of $G[B_1]$ into $F_{> 1}$ cannot create any $S$-cycle because $\widehat{G}$ with respect to $X$ and $A_1$ is an $S$-forest.
Thus by Lemma~\ref{lem:correct} we correctly compute a maximum $S$-forest with the desired properties.
Clearly the set $B_1$ can be constructed in polynomial time.
Since the number of connected components $G[B_1]$ is at most two, all steps can be executed in polynomial time.
\end{proof}

Next we consider the tuple $(A_1,A_2)$.

\begin{lemma}\label{lem:caseA1A2}
Let $X \subseteq N[S]$ and let $A_1,A_2$ be subsets of $X \setminus S$ such that both $F_{\leq 1}=G[X]$ and $\widehat{G}$ with respect to $X$ and $(A_1,A_2)$ are $S$-forests.
There exists a polynomial-time algorithm that computes a maximum $S$-forest $F$ with an $S$-distance partition $(F_{\leq 1}, F_{>1})$ having a cut satisfying $(A_1,A_2)$.
\end{lemma}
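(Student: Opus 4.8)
The plan is to mirror the structure of the proof of Lemma~\ref{lem:caseA1}, but now cope with the fact that the cut may involve \emph{two} sets $A_1,A_2$, hence up to two connected components $C_1,C_2$ of $F_{>1}$, and these must be attached to $A_1$ and $A_2$ respectively. First I would identify, exactly as before, the candidate vertices. For $i\in\{1,2\}$ let
$$
B_i=(V\setminus(X\cup S))\setminus\bigl\{w\in V: N(w)\cap(X\setminus(S\cup A_i))\neq\emptyset\bigr\},
$$
so that $B_i$ is the set of vertices of $V\setminus(X\cup S)$ whose only neighbours in $X$ lie in $A_i$; any component of $F_{>1}$ satisfying $A_i$ must be a connected component of $G[B_i]$. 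Note $B_1\cap B_2$ consists of vertices whose $X$-neighbourhood is contained in $A_1\cap A_2$, so such a vertex could in principle be assigned to either side; this overlap is the source of the difficulty.

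Next I would handle the easy sub-case. If $X\cap S=\emptyset$ then, as already observed in the running text, $G-S$ is a trivial solution, so assume $X\cap S\neq\emptyset$; then $d'<d=3$, so $d'\le 2$ and the tuple really does have at most two coordinates. If one wants only one component (i.e.\ effectively $d'=1$), fall back on Lemma~\ref{lem:caseA1}. Otherwise I want to pick one connected component $C_1$ of $G[B_1]$ and one connected component $C_2$ of $G[B_2]$, with $C_1\neq C_2$ (as vertex sets — more precisely $C_1\cap C_2=\emptyset$, which for two distinct connected components of the \emph{same} graph is automatic, but here they come from possibly different graphs $G[B_1],G[B_2]$, so I must require $C_1\cap C_2=\emptyset$), maximising $w(C_1)+w(C_2)$. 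The number of connected components of each $G[B_i]$ is at most two by Observation~\ref{obs:S2d}-style reasoning on the independent-set bound restricting $F_{>1}$ to at most $d'\le 2$ components; more directly, since any $S$-forest of $G$ has at most $d'$ components in $F_{>1}$ and we only ever keep components that can coexist, it suffices to enumerate $\Oh(1)$ many pairs $(C_1,C_2)$ of components and test each. For each candidate pair I check that $C_1\cap C_2=\emptyset$ (discard otherwise) and return the pair of maximum total weight, together with $X$.

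The correctness of each retained choice is exactly Lemma~\ref{lem:correct}(ii): since $\widehat{G}$ with respect to $X$ and $(A_1,A_2)$ is an $S$-forest and the graph $H$ we build — namely $G[X\cup C_1\cup C_2]$ — has $S$-distance partition $(H[X],H-X)$ with a cut satisfying $(A_1,A_2)$ (each $C_i$ is a union of components of $G[B_i]$, hence $N(C_i)\cap X\subseteq A_i$, and $C_1,C_2$ are disjoint and non-adjacent because any vertex adjacent to both would have a neighbour outside $A_1$ or outside $A_2$), $H$ is an $S$-forest. Maximality follows because, conversely, any $S$-forest $F$ with the prescribed $F_{\leq 1}=G[X]$ and a cut satisfying $(A_1,A_2)$ has its $F_{>1}$ decomposing into at most two components, each of which — being a connected subgraph of $G$ with all $X$-neighbours inside $A_i$ and at distance $\ge 2$ from $X\cap S$ — is contained in some connected component of the corresponding $G[B_i]$; so its weight is at most the weight of the pair we selected. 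Running this over all $S_{\leq 1}\in L_1$ and all $\Oh(1)$ tuples, and taking the heaviest resulting $S$-forest, yields the optimum. The main obstacle is the bookkeeping around $B_1\cap B_2$: one must argue that it is never advantageous, and never necessary, to split a single connected component of $G[B_1\cap B_2]$ across the two sides — a whole such component can always be pushed entirely to, say, side $1$ without violating the cut condition for $A_2$, because its $X$-neighbourhood lies in $A_1\cap A_2\subseteq A_1$ — and hence enumerating components of $G[B_1]$ and of $G[B_2]$ separately and requiring disjointness captures every feasible configuration.
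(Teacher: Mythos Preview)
Your approach has a genuine gap: it is \emph{not} true that an optimal pair $(C_1,C_2)$ can always be taken to be full connected components of $G[B_1]$ and $G[B_2]$, nor that two disjoint such components are automatically non-adjacent. Your justification ``any vertex adjacent to both would have a neighbour outside $A_1$ or outside $A_2$'' is a non sequitur: the obstruction is an edge between a vertex of $B_1\setminus B_2$ and a vertex of $B_2\setminus B_1$, and nothing in your argument rules this out. Concretely, take $S=\{s\}$, $X=\{s,p,q\}$ with $sp,sq\in E$ and $pq\notin E$, $A_1=\{p\}$, $A_2=\{q\}$; add vertices $a,b$ adjacent to $p$ and $c,d$ adjacent to $q$, with the only further edges being the path $a\!-\!b\!-\!c\!-\!d$. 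Then $\alpha(G)=3$, both $G[X]$ and $\widehat G$ are $S$-forests, $B_1=\{a,b\}$ and $B_2=\{c,d\}$ are each a single component, and they are disjoint yet adjacent. Your procedure accepts the pair and outputs $G[X\cup\{a,b,c,d\}]=G$, which contains the $S$-cycle $s\,p\,b\,c\,q\,s$; even if you additionally test non-adjacency and discard the pair, you are left with a single component of weight $2$, while the optimum for this fixed tuple is $F_{>1}=\{a,b\}\cup\{d\}$ (or $\{a\}\cup\{c,d\}$) of weight $3$. Hence for a fixed $(A_1,A_2)$ your algorithm neither guarantees an $S$-forest nor achieves the maximum. (That the outer loop over \emph{all} tuples might recover the optimum via a smaller $A_2'$ does not salvage the lemma as stated.)

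The paper's proof avoids this by \emph{guessing witnesses}: it iterates over all non-adjacent pairs $w_1\in B_1$, $w_2\in B_2$ (intended to lie in $C_1,C_2$), observes that $C_1\subseteq B_1\setminus N[w_2]$ and $C_2\subseteq B_2\setminus N[w_1]$, and uses $\alpha(G)\le 3$ together with a vertex of $X\cap S$ to show that both restricted sets are cliques and hence disjoint. The remaining task---choose as many vertices as possible from the two cliques so that no edge runs between the chosen parts---is exactly weighted minimum vertex cover on the bipartite graph obtained by keeping only the cross edges, solvable by max-flow. In the example above, guessing $w_1=a$, $w_2=d$ reduces to a one-edge bipartite instance and correctly returns weight $3$. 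The bipartite vertex-cover step is the key idea your proposal is missing.
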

\begin{proof}
Similar to the proof of Lemma~\ref{lem:caseA1}, we first construct the sets $B_1,B_2$ that contain vertices of $V\setminus (X \cup S)$ and satisfy the cut obtained from $X$:
\begin{align*}
B_1=(V\setminus (X \cup S))\setminus\left\{w \in V: N(w) \cap (X \setminus (S \cup A_1)) \neq \emptyset\right\}\phantom{.} & \textrm{ and } \\
B_2=(V\setminus (X \cup S))\setminus\left\{w \in V: N(w) \cap (X \setminus (S \cup A_2)) \neq \emptyset\right\}.&
\end{align*}
As the desired cut of $(F_{\leq 1}, F_{>1})$ satisfies $(A_1,A_2)$, there are two connected components of $F_{>1}$ which are subsets of these two sets respectively.
Let $C_1$ and $C_2$ be the connected components of $F_{>1}$ such that $C_1 \subseteq B_1$ and $C_2 \subseteq B_2$.
Now observe that there should be two non-adjacent vertices $w_{1}\in B_{1}$ and $w_{2}\in B_{2}$ that belong to $C_1$ and $C_2$, respectively.
We iterate over all possible pairs of non-adjacent vertices $w_{1} \in B_1 \cap C_1$ and $w_{2} \in B_2 \cap C_2$ in $\Oh(n^{2})$ time. 
Assuming a given choice for $w_{1}$ and $w_{2}$, observe the following:
\begin{itemize}
\item Since $w_{1}$ and $w_{2}$ are vertices of different connected components of $F_{>1}$, the components themselves are further restricted to be subsets of $B_{1}\setminus N[w_{2}]$ and $B_{2}\setminus N[w_{1}]$, respectively. That is, $C_1 \subseteq (B_{1}\setminus N[w_{2}])$ and $C_2 \subseteq B_{2}\setminus N[w_{1}]$.
\item Since $F$ has at least one vertex of $S$, $w_{1},w_{2}\in V \setminus (X \cup S)$ are non-adjacent, and by the fact $d\leq 3$, we have that $B_{1}\setminus N[w_{2}]$ and $B_{2}\setminus N[w_{1}]$ induce cliques in $G$. Thus $B_{1}\setminus N[w_{2}] \subseteq N[w_{1}]$ and $B_{2}\setminus N[w_{1}] \subseteq N[w_{2}]$, respectively.
\end{itemize}
Then by the second statement it is not difficult to see that $B_{1}\setminus N[w_{2}]$ and $B_{2}\setminus N[w_{1}]$ are disjoint.
Let $B'_1 = (B_{1}\setminus N[w_{2}])\setminus\{w_{1}\}$ and $B'_2 = (B_{2}\setminus N[w_{1}])\setminus\{w_{2}\}$.
Now in order to find the maximum induced $S$-forest under the stated conditions and our assumption that $w_{1}$ and $w_{2}$ belong to the two connected components of $F_{>1}$ it suffices to find the maximum subset $C_1 \cup C_2$ of $B'_1\cup B'_2$ such that there are no edges between the vertices of $C_1 \cap B'_1$ and the vertices of $C_2 \cap B'_2$.
This boils down to compute a minimum weighted vertex cover on the bipartite graph $G'$ obtained from $G[B'_1\cup B'_2]$ and removing the edges inside $G[B'_1]$ and $G[B'_2]$.
By maximum flow standard techniques, we compute a minimum weighted vertex cover $U$ on $G'$ in polynomial time \cite{Orlin13}.
Therefore $G[B'_1\cup B'_2] - U$ contains the connected components $C_1\setminus \{w_1\}$ and $C_1\setminus \{w_2\}$, as required.
\end{proof}

Now we are equipped with our necessary tools in order to obtain our main result, namely a polynomial-time algorithm that solves \textsc{Weighted Subset Feedback Vertex Set} on graphs of independent set of size at most $3$.

\begin{theorem}\label{theo:wsfvspoly}
\textsc{Weighted Subset Feedback Vertex Set} on graphs of independent set of size at most $3$ can be solved on time $n^{\Oh(1)}$.
\end{theorem}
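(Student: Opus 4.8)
The plan is simply to assemble the pieces established above into one enumeration-and-optimization procedure, specialized to $d=3$. Fix an instance $(G,S,k)$ with $\alpha(G)\le 3$. By the equivalence noted just before Observation~\ref{obs:S2d}, it suffices to compute the maximum weight $\omega^\star$ of an $S$-forest of $G$ and to answer ``yes'' exactly when $w(V)-\omega^\star\le k$. First I would dispose of the degenerate situation: if an optimal $S$-forest $F$ satisfies $S_{\le 1}\cap S=\emptyset$, then $F$ contains no $S$-vertex, so $G-S$ is itself an $S$-forest, of weight $w(V)-w(S)$, which I record as one fixed candidate. From now on assume $S_{\le 1}\cap S\ne\emptyset$. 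Then the number of connected components of $F_{>1}$ is at most $d-1=2$: picking one vertex from each component yields an independent set of $G$, and adding any vertex of $S_{\le 1}\cap S$ — which lies at distance at least two from every vertex of $F_{>1}$ — keeps it independent, so there can be at most $d-1$ components.

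Next I would enumerate all candidates for $S_{\le 1}$, exactly as in the recipe justified after Lemma~\ref{closed-N-obs}: for every subset $S'\subseteq S$ with $|S'|\le 2d$ (there are $n^{\Oh(d)}$ of these by Observation~\ref{obs:S2d}(2)), and for every subset $N'$ of $N(S')$ of a size permitted by Lemma~\ref{closed-N-obs} (so that $|S'|+|N'|\le 4d-2$ or $|S'|+|N'|\le 2d$ according to the two cases), I add $S'\cup N'$ to a list $L_1$ whenever $G[S'\cup N']$ is an $S$-forest. Since $d=3$ is constant, $L_1$ has size $n^{\Oh(1)}$ and is produced in time $n^{\Oh(1)}$; by Observation~\ref{obs:S2d} and Lemma~\ref{closed-N-obs} every set $S_{\le 1}$ arising from an actual $S$-forest with $S_{\le 1}\cap S\ne\emptyset$ belongs to $L_1$, and every member $X$ of $L_1$ has $|X|\le 4d=12$.

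Now, for each $X\in L_1$ I would enumerate the admissible tuples. Because $|X|\le 12$ and tuples have length at most $d-1=2$, there are only $\Oh(1)$ candidates — the empty tuple, the tuples $(A_1)$, and the tuples $(A_1,A_2)$ with $A_i\subseteq X\setminus S$; for each I build $\widehat{G}$ with respect to $X$ and the tuple and retain it only when $\widehat{G}$ is an $S$-forest, a test done in polynomial time. By Lemma~\ref{lem:correct}, restricting attention to these tuples is without loss: for any $S$-forest $F$ with $S$-distance partition and $S_{\le 1}=X$ there is an admissible tuple that the cut of $(F_{\le 1},F_{>1})$ satisfies and such that every induced subgraph of $G$ satisfying it (with the corresponding $S$-distance partition) is an $S$-forest. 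For each pair $(X,\text{admissible tuple})$ I then invoke Lemma~\ref{lem:caseA1} when the tuple has length one, Lemma~\ref{lem:caseA1A2} when it has length two, and simply record $G[X]$ of weight $w(X)$ when the tuple is empty; each runs in polynomial time and returns a maximum-weight $S$-forest $F$ with $S$-distance partition $(F_{\le 1},F_{>1})$, $F_{\le 1}=G[X]$, whose cut satisfies the tuple. Finally I return the largest weight among the single candidate $G-S$ and all these polynomially many computed $S$-forests.

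For correctness, every object produced is genuinely an $S$-forest: trivially for $G-S$, by membership in $L_1$ for $G[X]$, and by their statements for the outputs of Lemmas~\ref{lem:caseA1} and~\ref{lem:caseA1A2}. Conversely, let $F$ be any $S$-forest; if $S_{\le 1}\cap S=\emptyset$ then $w(F)\le w(G-S)$, and otherwise $S_{\le 1}\in L_1$, its $S$-distance partition satisfies some admissible tuple by Lemma~\ref{lem:correct}, and the invocation of the relevant lemma on $(S_{\le 1},\text{that tuple})$ returns an $S$-forest of weight at least $w(F)$. Hence the returned value equals $\omega^\star$, and the running time is $n^{\Oh(1)}$, as $L_1$ and all tuple lists are built in polynomial time and only polynomially many polynomial-time subroutine calls follow. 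Since essentially all the combinatorial content already sits in the earlier lemmas, no real difficulty remains here; the only point needing care is the case bookkeeping — verifying that at $d=3$ one never needs tuples of length more than two, so that Lemmas~\ref{lem:caseA1} and~\ref{lem:caseA1A2} genuinely cover all cases, and not forgetting the empty-tuple and $S_{\le 1}\cap S=\emptyset$ cases. The genuinely delicate step, already carried out, was Lemma~\ref{lem:caseA1A2}, where the hypothesis $d\le 3$ is precisely what forces $B_{1}\setminus N[w_{2}]$ and $B_{2}\setminus N[w_{1}]$ to be cliques and reduces the optimization to a minimum-weight vertex cover in a bipartite graph.
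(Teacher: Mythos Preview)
Your proposal is correct and follows essentially the same approach as the paper's own proof: enumerate the candidates for $S_{\le 1}$ via Observation~\ref{obs:S2d} and Lemma~\ref{closed-N-obs}, enumerate admissible tuples of length at most two, and then dispatch to Lemmas~\ref{lem:caseA1} and~\ref{lem:caseA1A2}. Your write-up is slightly more careful about bookkeeping --- you explicitly justify why at most two components of $F_{>1}$ can occur when $S_{\le 1}\cap S\ne\emptyset$, and you separately record the empty-tuple and $S_{\le 1}\cap S=\emptyset$ base cases --- but the underlying argument is identical.
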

\begin{proof}
Let us briefly explain such an algorithm for computing a maximum $S$-forest $F$ of a graph $G$ having independent set size at most three. Let $d=3$.
Initially we set $F^{*} = G - S$.
Then for every set $X \subseteq N[S]$ with $|X|\leq 4d$ such that $G[X]$ is an $S$-forest, we try by brute force all subsets $A_1$ and $(A_1,A_2)$ with $A_i \subseteq (X \setminus S)$ such that $\widehat{G}$ with respect to $X$ and $A_1$ or $(A_1,A_2)$ is an $S$-forest.
For each of such subsets we find a maximum $S$-forest $F$ with an $S$-distance partition $(G[X], F_{>1})$ having a cut satisfying $A_1$ or $(A_1,A_2)$, respectively, by applying the algorithms described in Lemma~\ref{lem:caseA1} and Lemma~\ref{lem:caseA1A2}.
At each step, we maintain the maximum weighted $S$-forest $F^*$ by comparing $F$ with $F^*$.
Finally we provide the vertices $V \setminus V(F^*)$ as the set with the minimum total weight that are removed from $G$.

By Lemma~\ref{closed-N-obs}, it is sufficient to consider the described subsets $X$.
Since every induced subgraph of $G-X$ contains at most two connected components,
Lemma~\ref{lem:correct} implies that all possible subsets $A_1$ or $(A_1,A_2)$ with the described properties are enough to consider.
Thus the correctness follows from Lemmata~\ref{closed-N-obs}--\ref{lem:caseA1A2}.
Regarding the running time, notice that whether a graph contains an $S$-cycle can be tested in linear time.
Thus we can construct all described and valid subsets in $n^{\Oh(1)}$ time.
Therefore the total running time of the algorithm takes time $n^{\Oh(1)}$, since each of the algorithms given in Lemma~\ref{lem:caseA1} and Lemma~\ref{lem:caseA1A2}, respectively, requires polynomial time.
\end{proof}


Let us now show that extending Theorem~\ref{theo:wsfvspoly} to graphs of larger independent sets is not possible.
More precisely with the following result we show that \textsc{Weighted Subset Feedback Vertex Set} is {para-NP-complete} parameterized by $\alpha(G)$.

\begin{theorem}\label{theo:wsfvsdNP}
\textsc{Weighted Subset Feedback Vertex Set} is NP-complete on graphs of independent set of size at most $4$.
\end{theorem}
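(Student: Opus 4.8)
The plan is to first dispense with membership in NP and then prove hardness by a polynomial-time reduction from a suitable NP-complete problem. Membership is routine: a certificate is a vertex set $X$ with $w(X)\le k$, and given $X$ one verifies in polynomial time that $G-X$ contains no $S$-cycle, for instance by computing the block--cut tree of $G-X$ and checking that no vertex of $S$ lies in a block on at least three vertices. (That $\alpha(G)\le 4$ holds can itself be checked in polynomial time by brute force over all quintuples of vertices, though in our reduction it will hold by construction.) Hence the problem lies in NP.

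For the hardness I would reduce from \textsc{Max Cut}, which is NP-complete on general graphs. Given an instance $(H,t)$ with $H=(V_H,E_H)$, the goal is to build a vertex-weighted graph $G$ with a set $S\subseteq V(G)$ so that the maximum weight of an $S$-forest of $G$ equals $w(V(G))$ minus a quantity that is minimized exactly when the corresponding bipartition of $V_H$ has at least $t$ crossing edges; the target $k$ is then set accordingly. To guarantee $\alpha(G)\le 4$ it suffices to arrange that $V(G)$ is covered by four cliques, since then $\kappa(G)\le 4$ and therefore $\alpha(G)\le 4$. Concretely, two of these cliques are \emph{selector} cliques $L=\{\ell_v:v\in V_H\}$ and $R=\{r_v:v\in V_H\}$, where keeping $\ell_v$ (resp.\ $r_v$) in the $S$-forest encodes placing $v$ on the left (resp.\ right) side of the cut; the remaining two cliques host constant-size edge gadgets together with a few fixed \emph{anchor} vertices of $S$. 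All selector vertices and anchors receive a weight $W$ exceeding the total weight of all other vertices, so they belong to every maximum $S$-forest, and the anchors and the wiring of the gadgets are set up so that, for each $v$, retaining both $\ell_v$ and $r_v$ would close a cycle through an $S$-anchor, forcing a left/right choice, while for each edge $uv\in E_H$ the associated gadget can be fully kept in the $S$-forest exactly when $u$ and $v$ end up on opposite sides. This matches the structural picture of Lemma~\ref{lem:correct}: the dominating-weight $S$-anchors play the role of $S_{\le 1}\cap S$, so $d'<d=4$, i.e.\ $d'\le 3$, and the components $C_1,C_2,C_3$ of $F_{>1}$ play the roles of the ``left'', ``right'' and ``crossing-edge'' parts, whose admissibility is governed by which subsets $A_i$ of the selector/gadget vertices they may touch. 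Calibrating the gadget weights so that each crossing edge contributes a fixed surplus makes a maximum-weight $S$-forest correspond precisely to a maximum cut of $H$.

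The main obstacle I expect is exactly the gadget design under the tight budget of only four cliques (equivalently, only three usable components of $F_{>1}$): each gadget vertex must be non-adjacent to enough selector vertices to encode its edge, yet adjacent to enough of the remaining vertices so as not to enlarge an independent set beyond size four, and the anchors must simultaneously force the per-vertex side choice, reward exactly the crossing edges, and never themselves lie on an unintended $S$-cycle no matter which selectors are kept. Checking that no spurious $S$-cycle survives, in both directions of the equivalence, is the delicate part and will constitute the bulk of the proof; the weight bookkeeping (the value of $W$, the per-edge surplus, and $k$) and the polynomial size of the construction are then immediate. Finally, note that the reduction must use weights essentially, since by Theorem~\ref{theo:sfvspoly} the unweighted problem is polynomial for every fixed $\alpha(G)$; in our construction this is precisely the role of the dominating weight $W$, used to pin down $S_{\le 1}$, and of the calibrated gadget weights, used to count crossing edges.
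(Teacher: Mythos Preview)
Your proposal is a plan rather than a proof: the entire technical content of the hardness argument --- the gadget that simultaneously enforces the per-vertex side choice, rewards exactly the crossing edges, and creates no spurious $S$-cycles, all while keeping $\alpha(G)\le 4$ --- is acknowledged as ``the delicate part'' but is nowhere constructed. As written there is also an internal inconsistency: you assign the dominating weight $W$ to \emph{all} selector vertices and conclude that ``they belong to every maximum $S$-forest'', yet two lines later you require that retaining both $\ell_v$ and $r_v$ closes an $S$-cycle, ``forcing a left/right choice''. These two statements cannot both hold; if every selector must survive, no choice is ever forced, and if a choice is forced, then not every selector survives and your weight argument for pinning down $S_{\le 1}$ collapses. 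Relatedly, if the $S$-anchors are in the forest and each retained selector is adjacent to an anchor, then all retained selectors lie in $S_{\le 1}$, which by Lemma~\ref{closed-N-obs} has size at most $4d-2=14$ --- far too few to encode an arbitrary bipartition of $V_H$.

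The paper sidesteps all of this with a much simpler reduction, from \textsc{Vertex Cover} on tripartite graphs. Given a tripartite $H=(A,B,C,E)$, complete each colour class to a clique, add three hubs $r_A,r_B,r_C$ adjacent to $A,B,C$ respectively, and a single vertex $s$ adjacent only to the three hubs; set $S=\{s\}$, give $s$ and the hubs weight $n$, and give the original vertices weight~$1$. The resulting graph has $\alpha\le 4$ by construction (three cliques plus $\{s\}$), and an uncovered edge $xy$ with $x\in A$, $y\in B$ yields the $S$-cycle $\langle s,r_A,x,y,r_B\rangle$, so subset feedback vertex sets of weight $<n$ are exactly the vertex covers of $H$. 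No edge gadgets, no counting surplus, no careful non-adjacency bookkeeping --- the uncovered edge \emph{is} the $S$-cycle. If you wish to salvage a \textsc{Max Cut} route you would need to actually exhibit the gadget and verify both directions; but the tripartite \textsc{Vertex Cover} reduction is both shorter and complete.
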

\begin{proof}
We will provide a polynomial reduction from the \textsc{Vertex Cover} (VC) problem on tripartite graphs which is NP-complete \cite{GJ}.
Let $G=(A,B,C,E)$ be a tripartite graph where $(A,B,C)$ is the partition of $V(G)$.
We construct a weighted graph $G'$ from $G$ in polynomial time as follows.
\begin{itemize}
\item We turn the three independent sets $A$, $B$ and $C$ into cliques by adding all necessary edges and we give all vertices unary weight. 
\item We add a vertex $r_{A}$ that is adjacent to all of the vertices of $A$ and we assign weight $n$ to $r_A$.
In a completely symmetric way, we add vertices $r_B$ and $r_C$ with respect to the sets $B$ and $C$, respectively.
\item We add a vertex $s$ that is adjacent to all three vertices $r_A, r_B, r_C$ and we assign weight $n$ to $s$.
\end{itemize}
This completes the construction of $G'$.
Observe that all vertices of $V(G')\setminus \{s,r_A,r_B,r_C\}$ have weight equal to one.
It is not difficult to verify that the constructed graph $G'$ is a graph having an independent set at most $4$, since $G'-\{s\}$ is a vertex-disjoint union of three cliques.

Next we claim that $G$ has a vertex cover $U$ of size at most $k<n$ if and only if $G'$ with $S=\{s\}$ has a subset feedback vertex set of weight at most $k$.
Assume a vertex cover $U$ of $G$.
By definition, $U$ covers all edges of $G$, so that $G[(A\cup B\cup C)\setminus U]$ is an independent set.
This means that $G'[(A\cup B\cup C)\setminus U]$ is a vertex-disjoint union of cliques.
Since $s$ is non-adjacent to any vertex of $G$ and $G'[r_A,r_B,r_C]$ is an independent set, every cycle of $G' - U$ contains a vertex of $r_A,r_B$ and $r_C$ with at least two vertices from $A,B$ and $C$, respectively.
Thus $G' - U$ is a connected $S$-forest.
Therefore $U$ is a subset feedback vertex set of $(G',\{s\})$ of size at most $k$.

For the opposite direction, assume a subset feedback vertex set $F$ of $(G',\{s\})$.
If $F$ is not a subset of $A\cup B\cup C$, then its sum of weights is greater or equal to $n$.
Then $F$ is not a minimum subset feedback vertex set of $(G',\{s\})$, since $A\cup B\cup C$ minus a single vertex is trivially a subset feedback vertex set of $(G',\{s\})$ of total weight $n-1$.
Thus $F$ is indeed a subset of $A\cup B\cup C$.
Assume that $F$ is not a vertex cover of $G$.
By definition, there is an edge of $G$ that remains uncovered.
Without loss of generality, assume that this edge has its endpoints on the vertices $x\in A$ and $y\in B$.
Then $\langle s,r_{A},x,y,r_{B}\rangle$ is an induced cycle of $G'$, which contradicts the fact that $F$ is a subset feedback vertex set of $(G',\{s\})$.
Therefore $F$ is a vertex cover of $G$.
\end{proof}

We stress that Theorem~\ref{theo:wsfvsdNP} further implies that the NP-completeness result carries along to graphs of clique cover number at most four,
since the constructed graph given in the proof can be partitioned into four disjoint cliques.

\section{SFVS on Graphs of Bounded Independent Set}\label{sec:unweightedSFVS}
Here we show that despite the complexity dichotomy result for the \textsc{Weighted Subset Feedback Vertex Set},
whenever the weights of the vertices are equal \textsc{Subset Feedback Vertex Set} can be solved in polynomial time on graphs of bounded independent set number.

\begin{theorem}\label{theo:sfvspoly}
\textsc{Subset Feedback Vertex Set} on graphs of independent set of size at most $d$ can be solved in time $n^{\Oh(d)}$.
\end{theorem}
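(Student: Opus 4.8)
The plan is to compute a maximum-cardinality $S$-forest of $G$, which is equivalent to \textsc{Subset Feedback Vertex Set}. If $G$ admits an $S$-forest $F$ with $V_F\cap S=\emptyset$ then $G-S$ is a trivial solution, so assume $V_F\cap S\neq\emptyset$. We follow the skeleton of the proof of Theorem~\ref{theo:wsfvspoly}, but now we are allowed time $n^{\Oh(d)}$, which lets us enumerate considerably more. By Observation~\ref{obs:S2d}~(2), guess the set $S^{*}=S\cap V_F$ by trying all subsets of $S$ of size at most $2d$ that induce an $S$-forest; there are $n^{\Oh(d)}$ of them. For each such $S^{*}$, by Lemma~\ref{closed-N-obs} guess the vertex set $S_{\leq 1}$ of $F_{\leq 1}$ by trying all sets $S^{*}\cup R$ with $R\subseteq N(S^{*})\setminus S$ and $|S^{*}\cup R|\leq 4d$ that induce an $S$-forest; again there are $n^{\Oh(d)}$ of them. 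Fix such an $S_{\leq 1}$, put $F_{\leq 1}:=G[S_{\leq 1}]$, and let $P:=(V\setminus S)\setminus N[S^{*}]$ be the ground set for $F_{>1}$.

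By Lemma~\ref{lem:correct} it suffices, for every tuple $(A_{1},\ldots,A_{d'})$ with $A_i\subseteq S_{\leq 1}\setminus S$ and $d'\leq d-1$ (the bound on the number of components of $F_{>1}$) for which $\widehat{G}$ is an $S$-forest, to compute a maximum-cardinality set $F_{>1}\subseteq P$ all of whose connected components $C_i$ satisfy $N(C_i)\cap S_{\leq 1}\subseteq A_i$; taking the best $F_{\leq 1}\cup F_{>1}$ over all guesses yields the answer, and the number of tuples to be tried is bounded by a function of $d$. This is exactly the point where the argument departs from Theorem~\ref{theo:wsfvspoly}: there $d'\leq 2$ and Lemmata~\ref{lem:caseA1} and~\ref{lem:caseA1A2} applied directly, whereas now $d'$ may be as large as $d-1$, so we cannot afford the clique-based reasoning of the $d\leq 3$ case as is.

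To cope with an arbitrary $d'$ we exploit that we only need cardinalities. Write $B_i:=\{v\in P: N(v)\cap S_{\leq 1}\subseteq A_i\}$, so every $C_i\subseteq B_i$ and the $C_i$ are pairwise non-adjacent. Since the $C_i$ are precisely the connected components of $G[F_{>1}]$, their maximum independent sets $I_i$ are pairwise disjoint and $\sum_i|I_i|=\alpha(G[F_{>1}])\leq\alpha(G)\leq d$. Hence we may additionally guess, in $n^{\Oh(d)}$ time, the union $I:=\bigcup_i I_i$ (at most $d$ vertices of $P$) together with a partition $I=I_{1}\sqcup\cdots\sqcup I_{d'}$ into at most $d-1$ nonempty parts (at most $d^{\,d}\le n^{\,d}$ choices; we may assume $d\le n$, since otherwise the complexity is already governed by $\alpha(G)\le n$). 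For the correct guess, $I_i$ is a maximum independent set of $C_i$, so $C_i\subseteq N[I_i]$, and moreover $C_i$ is non-adjacent to every $I_j$ with $j\neq i$; thus $C_i$ is confined to $\widehat{B}_i:=(N[I_i]\cap B_i)\setminus\bigcup_{j\neq i}N[I_j]$, and the sets $\widehat{B}_1,\ldots,\widehat{B}_{d'}$ are pairwise disjoint with $I_i\subseteq\widehat{B}_i$.

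It remains, for each of these $n^{\Oh(d)}$ configurations, to choose $C_i\subseteq\widehat{B}_i$ of maximum total size subject to the $C_i$ being pairwise non-adjacent; equivalently, to delete a minimum number of vertices of $\bigsqcup_i\widehat{B}_i$ so as to destroy every edge between distinct $\widehat{B}_i$, that is, to compute a minimum vertex cover of the ``conflict graph'' spanned by those inter-block edges. I expect this last step to be the main obstacle for $d'\geq 3$: when $d'\leq 2$ the conflict graph is bipartite and a minimum (weighted) vertex cover is obtained by the maximum-flow argument of Lemma~\ref{lem:caseA1A2}, but for larger $d'$ one must argue that the conflict graph remains tractable. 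The plan for this is to enumerate one further representative vertex per block at cost $n^{\Oh(d)}$ and invoke an independence-number argument in the spirit of Lemma~\ref{lem:caseA1A2} (using that an independent set inside $\widehat{B}_i$, one vertex from each other block, and one vertex of $S^{*}$ together form an independent set of $G$, bounding the local independence number) to reduce each block to a bounded number of cliques, so that the conflict-graph vertex cover splits into bipartite subinstances solvable by standard flow techniques. Assembling all stages, the running time is $n^{\Oh(d)}$, and correctness follows from Observation~\ref{obs:S2d}, Lemma~\ref{closed-N-obs} and Lemma~\ref{lem:correct} exactly as in Theorem~\ref{theo:wsfvspoly}.
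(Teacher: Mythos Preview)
Your proposal imports the machinery of Theorem~\ref{theo:wsfvspoly} (the weighted case for $d\le 3$) and tries to push it to arbitrary $d$. You correctly isolate the bottleneck: after all the guessing you must compute a minimum vertex cover of the $d'$-partite ``conflict graph'' on $\widehat{B}_1,\ldots,\widehat{B}_{d'}$ with only inter-block edges, where $d'$ can be as large as $d-1$. For $d'\ge 3$ this is vertex cover on a tripartite (or larger) graph, which is NP-hard in general, so a genuine extra idea is required here. The plan you sketch---guess one more representative per block and bound the local independence number of $\widehat{B}_i$ by combining an independent set there with the representatives of the other blocks and a vertex of $S^{*}$---does not go through as stated: vertices of $\widehat{B}_i$ may well be adjacent to the representatives of other blocks (those adjacencies are precisely the edges of the conflict graph), so the claimed independent set of $G$ need not be independent. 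As written, the last step is a real gap.

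The paper sidesteps all of this with a one-line observation that uses the \emph{unweighted} setting directly and never touches the $S$-distance partition. If $X$ is an optimal solution, then $|S\setminus X|\le 2d$ by Observation~\ref{obs:S2d}~(2); and also $|X\setminus S|\le 2d$, because $|X|-|S|=|X\setminus S|-|S\setminus X|$, so $|X\setminus S|>2d\ge|S\setminus X|$ would give $|X|>|S|$, contradicting optimality since $G-S$ is already an $S$-forest. Hence both $S\setminus X$ and $X\setminus S$ have at most $2d$ elements and can be enumerated in $n^{\Oh(d)}$; this determines $X$ completely, and one just checks whether $G-X$ is an $S$-forest. No tuples $(A_1,\ldots,A_{d'})$, no $\widehat{G}$, and no vertex-cover subroutine are needed. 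Incidentally, the same observation repairs your route as well: since at most $2d$ vertices of $P$ are absent from $F_{>1}$, you could simply guess $P\setminus V(F_{>1})$ instead of solving the conflict-graph instance---but then all the earlier guesses of $S_{\le 1}$ and the tuple become redundant, and you are back to the paper's argument.
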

\begin{proof}
Let $G=(V,E)$ be a graph with $\alpha(G) \leq d$ and let $S \subseteq V$.
Denote by $X\subseteq V$ a minimum subset feedback vertex set of $G$.
Let $F=G-X$ be a maximum $S$-forest of $G$.
By Observation~\ref{obs:S2d}~(2), the vertices of $S$ that belong to $F$ are at most $2d$.
Thus for every optimum solution $X$, the set $S\setminus X$ has at most $2d$ vertices.

Now we claim that it is enough to consider subsets $X'$ of $X$ for which $|X'|\leq 2d$.
To see this, observe that if $X\setminus S$ has order more than $2d$, then $G-S$ has more vertices than $G-X$, leading to a contradiction to the optimality of $X$.
Hence, $X\setminus S$ has at most $2d$ vertices.
In order to find an optimal solution, it suffices to consider all such candidates $S'$ for $S\setminus X$ and $X'$ for $X\setminus S$.
To check whether an induced subgraph of $G$ consists an $S$-forest takes $\Oh(n+m)$ time.
Since the number of such sets $S'$ is at most $n^{2d}$ and the number of the considered sets $X'$ is at most $n^{2d}$, the total running time is bounded by $n^{\Oh(d)}$.
Therefore in time $n^{\Oh(d)}$ we compute a minimum subset feedback vertex set showing the claimed result.
\end{proof}

Regarding the dependence of the exponent in the running time of the algorithm given in Theorem~\ref{theo:sfvspoly},
note that we can hardly avoid this fact, since \textsc{Feedback Vertex Set} is W[1]-hard parameterized by the independent set number as explicitly given in \cite{BRV14}.
At the same time such an observation follows from the W[1]-hardness result from the construction given in \cite{arxivJKT17} with respect to the maximum induced matching width.
In the following result, we provide a different and simpler reduction from the \textsc{Multicolored Independent Set} problem \cite{FellowsHRV09,Pietrzak03} which shows an interesting connection with graphs of bounded independent set size.

\begin{theorem}
\textsc{Feedback Vertex Set} is W[1]-hard when parameterized by the clique cover number.
\end{theorem}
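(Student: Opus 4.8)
The plan is to give a parameterized reduction from \textsc{Multicolored Independent Set} --- given a graph $H$ whose vertex set is partitioned into $k$ color classes $V_1,\dots,V_k$, decide whether $H$ has an independent set containing exactly one vertex of each class --- which is well known to be \W{1}-hard parameterized by $k$. From an instance $(H,V_1,\dots,V_k)$ I would build a graph $G$ and a budget $\ell$ so that $G$ has a feedback vertex set of size at most $\ell$ if and only if $H$ has a multicolored independent set, while keeping $\kappa(G)$ bounded by a function of $k$ alone; since $\alpha(G)\le\kappa(G)$ this simultaneously yields the \W{1}-hardness for the independent set number mentioned in the footnote.

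In the most natural incarnation the construction has three ingredients. First, turn each class $V_i$ into a clique; since a clique on three vertices already contains a triangle, any feedback vertex set leaves at most two vertices of each $V_i$, and a small per-class gadget (two mutually non-adjacent vertices joined to all of $V_i$, so that two surviving class vertices would close a $4$-cycle with either of them) forces it to leave \emph{exactly one} representative $r_i\in V_i$. Second, encode the edges of $H$ by attaching, for every pair $\{i,j\}$, a gadget whose vertices are grouped per color pair so that the whole family of edge gadgets contributes only $O(k^{2})$ additional cliques to a clique cover of $G$, and whose purpose is to complete a cycle through $r_i$ and $r_j$ exactly when $r_ir_j\in E(H)$; thus on any admissible choice of representatives the surviving graph is a forest precisely when $\{r_1,\dots,r_k\}$ is independent in $H$. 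Third, set $\ell$ equal to the number of vertices deleted in the intended solution. The forward direction is then routine: from a multicolored independent set $I=\{r_1,\dots,r_k\}$, keep $I$ together with the acyclic part of the gadgets and delete the rest, then check directly that what remains is a disjoint union of stars and short paths, hence a forest, with exactly $\ell$ deletions. For the backward direction, a pigeonhole argument over the clique structure pins a feedback vertex set of size $\le\ell$ down to exactly one surviving vertex per class and to the prescribed gadget behaviour, after which a short case analysis shows that any surviving edge $r_ir_j$ of $H$ together with the $\{i,j\}$-gadget would create a cycle; hence the representatives form a multicolored independent set in $H$, and we conclude \W{1}-hardness parameterized by $\kappa(G)$.

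The main obstacle is the interaction of the second ingredient with the first two. Ruling out \emph{spurious} cycles among the representatives --- cycles present irrespective of whether the corresponding pair is an edge of $H$ --- forces the edge gadgets to have per-edge resolution; but a naive per-edge gadget vertex introduces $|E(H)|$ new vertices, which, unless they can be packed into few cliques, blows up both $\kappa(G)$ and $\alpha(G)$, whereas making such vertices adjacent to entire classes (so that they pack into few cliques) reintroduces precisely the spurious cycles one set out to avoid. Designing a gadget that is faithful edge by edge yet still clique-packable into $O(k^{2})$ cliques, and simultaneously choosing $\ell$ finely enough that no alternative configuration --- retaining two vertices of some class, or keeping an extra gadget vertex --- is ever as cheap as the intended solution, is the delicate part of the proof.
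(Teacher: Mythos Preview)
Your proposal is not yet a proof: the second ingredient, the per-pair edge gadget, is only described by specification and you yourself flag that realising it faithfully while keeping the clique cover bounded is ``the delicate part''. Until that gadget is actually given and verified, the reduction is incomplete.

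More importantly, the paper sidesteps this obstacle entirely, and the trick is worth knowing. Your first ingredient coincides with the paper's: make each $V_i$ a clique and attach two non-adjacent sentinels $x_i,y_i$ to all of $V_i$. But for the edge encoding the paper adds a \emph{single} universal vertex $z$ adjacent to every original vertex of $H$ (and to none of the sentinels). The point you overlooked is that the edges of $H$ are already present in the constructed graph --- cliquifying the classes only adds intra-class edges --- so whenever two surviving representatives $r_i\in V_i$ and $r_j\in V_j$ are adjacent in $H$, the triple $\{z,r_i,r_j\}$ is a triangle. Hence keeping $z$ in the forest forces the representatives to be pairwise non-adjacent, with no per-edge gadget at all. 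The clique cover is $V_1\cup\{x_1\},\dots,V_k\cup\{x_k\},\{y_1\},\dots,\{y_k\},\{z\}$, of size $2k+1$, and the budget is $n-k$ (equivalently, the target forest has $3k+1$ vertices). The backward direction then runs by a short exchange argument: if some $V_i$ contributed two vertices to the forest, both sentinels $x_i,y_i$ would be excluded, and swapping one class vertex for the two sentinels strictly enlarges the forest; hence each class contributes exactly one vertex, all $2k$ sentinels survive, and a count forces $z$ to survive as well, whence the representatives are independent.

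So the gap in your plan is not that the route is wrong, but that you are searching for an $O(k^2)$-clique edge gadget where a single apex vertex already does the job.
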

\begin{proof}
The reduction comes from the \textsc{Multicolored Independent Set} problem: given a graph $G$ and a partition $(V_1, \ldots, V_k)$ of $V(G)$, decide whether $G$ contains an independent set of size $k$ using exactly one vertex from each $V_i$.
It is known that \textsc{Multicolored Independent Set} is \W{1}-hard parameterized by $k$ \cite{FellowsHRV09,Pietrzak03}.
Let $(G, V_1, \ldots, V_k)$ be an instance of \textsc{Multicolored Independent Set}.
From $G$ we construct a graph $H$ as follows.
\begin{itemize}
\item We make every set $V_i$ clique by adding all necessary edges.
\item For each $V_i$ we add two vertices $x_i,y_i$ that are adjacent to every vertex of $V_i$.
\item We add a vertex $z$ that is adjacent to all the vertices of $G$.
\end{itemize}
This completes the construction of $H$.
Observe that $|V(H)|=n+2k+1$.
Let $X=\{x_1, \ldots, x_k\}$ and $Y=\{y_1, \ldots, y_k\}$.
Then $X \cup Y \cup \{z\}$ forms an independent set in $H$ of size $2k+1$.
Notice also that the vertices of $V_i \cup \{x_i\}$ induce a clique, so that $H$ has a clique partition of size $2k+1$.
Thus the clique cover number of $H$ is at most $2k+1$ which implies that $H$ has a clique cover number that is linearly dependent on $k$.
We claim that $G$ has a multicolored independent set if and only if $H$ has a feedback vertex set of size at most $n - k$.

Let $I_k$ be $k$ vertices from each of $V_1, \ldots, V_k$ that form an independent set in $G$.
We describe an induced forest $F$ of $H$ that contains $3k+1$ vertices starting from the vertices of $I_k$.
For each vertex $v_i$ of $I_k \cap V_i$ we add in $F$ both vertices $x_i$ and $y_i$.
Notice that $F$ contains $k$ disjoint trees of the form $\{v_i,x_i,y_i\}$.
Since $z$ is non-adjacent to every vertex of $X \cup Y$, we can safely include $z$ in $F$.
Thus $F$ is an induced forest with $3k+1$ vertices, so that $V(H)\setminus V(F)$ constitutes a feedback vertex set of size $n-k$.

For the opposite direction, let $U$ be a feedback vertex set of size $n-k$.
Then $F = H - U$ is an induced forest of $H$ that has at least $3k+1$ vertices.
We claim that from each $V_i$ there is at most one vertex in $F$ and $(X \cup Y) \subseteq V(F)$.
Assume for contradiction that at least two vertices from $V_i$ are contained in $F$.
Since $V_i$ is a clique in $H$, there are exactly two vertices $v_i,v'_i$ from $V_i$ in $F$.
Then neither $x_i$ nor $y_i$ is included in $F$ because they are both adjacent to $v_i$ and $v'_i$.
Then $F' = (F \setminus {v'_i}) \cup \{x_i,y_i\}$ is an induced forest of $H$, as $x_i,y_i$ are non-adjacent to any vertex of $H - V_i$.
This, however, shows that there is an induced forest with at least $|F|+1$ vertices, leading to a contradiction.
Thus $|V_i \cap V(F)| \leq 1$ which implies that $|V(F) \cap V(G)| \leq k$.
Then observe that $X\cup Y$ is an independent set in $H$ and no vertex from $X\cup Y$ induces a cycle with the vertices from $F$, so that $(X \cup Y) \subseteq V(F)$.
If $z \notin V(F)$ then $|V(F)|\leq 3k$.
Hence $z \in V(F)$ and $|V(F)|\leq 3k+1$.
Since $F$ contains at least $3k+1$ vertices, each of $V_i$ contains exactly one vertex in $F$.
Moreover the vertices of $V(G) \cap V(F)$ are pairwise non-adjacent because $z$ is a vertex of $F$ and $z$ is adjacent to every vertex of $G$.
Therefore the $k$ vertices of each of $V_i \cap V(F)$ form an independent set in $G$.
\end{proof}

\section{Extending to other Terminal Set Problems}\label{sec:terminals}
Let us now consider further {\em terminal set} problems that are related to \textsc{Subset Feedback Vertex Set}.
In these type of problems we are given a graph $G =(V, E)$, a terminal set $T \subseteq V$, and a nonnegative integer $k$ and the
goal is to find a set $X \subseteq V$ with $|X|\leq k$ which intersects all ``structures'' (such as cycles or paths) passing through the vertices in $T$ \cite{ChitnisFLMRS17}.
In this setting \textsc{Subset Feedback Vertex Set} is a particular terminal set problem when the objective structure is a cycle.
We show that the ideas that we developed for \textsc{Subset Feedback Vertex Set} on graphs of bounded independent set size, can be extended to further terminal set problems when the objective structure is a path instead of a cycle.

The (unweighted) \textsc{Node Multiway Cut} problem is formulated as follows.

\tikzstyle{mybox} = [draw=black, fill=none, very thin,
    rectangle, rounded corners, inner sep=7pt, inner ysep=8pt]
\tikzstyle{fancytitle} =[fill=white, text=black]
\begin{figure}[!h]
\centering
\begin{tikzpicture}
\node [mybox] (box){%
  \noindent
  \begin{tabularx}{0.9\textwidth}{@{\hspace{\parindent}} l X c}
    \textit{Input:} & A graph $G$, a set $T \subseteq V$ of \textit{terminals}, and a nonnegative integer $k$. \\[1pt]
    \textit{Task:} & Decide whether there is a set $X \subseteq V\setminus T$ with $|X|\leq k$ such that any path between two different terminals intersects $X$. 
  \end{tabularx}};
\node[fancytitle, right=6pt] at (box.north west) {\textsc{Node Multiway Cut}};
\end{tikzpicture}%
\end{figure}

\noindent Notice that in this problem we are not allowed to remove any terminal.
For graphs having independent set size at most $d$ we completely characterize the complexity of \textsc{Node Multiway Cut}.
In particular, for $d=3$ we can adopt the reduction given in Theorem~\ref{theo:wsfvsdNP}.

\begin{theorem}\label{theo:NMCNP}
Let $G$ be a graph of independent set of size at most $d$.
If $d\leq 2$ then \textsc{Node Multiway Cut} can be solved on time $n^{\Oh(1)}$.
Otherwise, \textsc{Node Multiway Cut} is NP-complete on graphs of independent set of size at most $3$.
\end{theorem}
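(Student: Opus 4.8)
The plan is to treat the two regimes separately. \textbf{The tractable case $d\le 2$.} The crucial structural fact is that, since $\alpha(G)\le 2$, for every $X\subseteq V$ the graph $G-X$ has at most two connected components: three distinct components would contribute three pairwise non-adjacent vertices, an independent set of size three. Because a solution $X$ leaves every terminal in the graph, it is therefore impossible to separate three or more terminals; hence if $|T|\ge 3$ the instance is a trivial no-instance, and if $|T|\le 1$ it is a trivial yes-instance with $X=\emptyset$. The only substantive case is $|T|=2$, say $T=\{t_1,t_2\}$. Here a set $X\subseteq V\setminus\{t_1,t_2\}$ is a solution exactly when $t_1$ and $t_2$ lie in different connected components of $G-X$, that is, $X$ is a vertex set separating $t_1$ from $t_2$; if $t_1t_2\in E(G)$ no such $X$ exists (answer no), and otherwise a minimum such $X$ is computed in polynomial time by the standard max-flow formulation (split every non-terminal vertex into an in/out pair joined by a unit-capacity arc and give the terminals infinite capacity), so we answer yes iff its size is at most $k$.

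\textbf{The hard case $d=3$.} I would adapt the reduction of Theorem~\ref{theo:wsfvsdNP} from \textsc{Vertex Cover} on tripartite graphs, which is NP-complete. From a tripartite graph $G=(A,B,C,E)$ --- so $A,B,C$ are independent sets and every edge of $G$ joins two distinct parts --- build $G'$ by turning each of $A,B,C$ into a clique and adding three vertices $r_A,r_B,r_C$, where $r_A$ is adjacent to all of $A$, $r_B$ to all of $B$, and $r_C$ to all of $C$; set $T=\{r_A,r_B,r_C\}$. Since $\{A\cup\{r_A\},\,B\cup\{r_B\},\,C\cup\{r_C\}\}$ partitions $V(G')$ into three cliques, $\alpha(G')\le\kappa(G')\le 3$, and every solution $X$ is automatically contained in $V(G')\setminus T=A\cup B\cup C$.

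The equivalence to establish is that $G$ has a vertex cover of size at most $k$ iff $(G',T,k)$ is a yes-instance. If $U\subseteq A\cup B\cup C$ is a vertex cover of $G$, then $G'-U$ retains no edge of $E(G)$, so it is the disjoint union of the three shrunken cliques (with their attached $r$-vertices) and the three terminals lie in pairwise distinct components; thus $U$ is a node multiway cut of size $|U|$. Conversely, if $X$ is a node multiway cut and some edge $xy\in E(G)$ with, say, $x\in A$ and $y\in B$ satisfies $x,y\notin X$, then $r_A,x,y,r_B$ is a path from $r_A$ to $r_B$ in $G'-X$, contradicting that $X$ is a multiway cut; hence $X$ covers every edge of $G$ and is a vertex cover. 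Membership in NP is immediate, and since graphs with $\alpha\le 3$ form a subclass of graphs with $\alpha\le d$ for every $d\ge 3$, the NP-completeness holds throughout the stated range.

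I do not expect a genuine obstacle; the only point needing a little care is the backward direction, where one must argue that no terminal-to-terminal path can survive in $G'-X$ unless some crossing edge of $G$ survives. This is precisely where the tripartiteness of $G$ is used: the three cliques making up $G'$ communicate with one another only through edges of $E(G)$, each of which joins two different parts, so after deleting a vertex cover they become pairwise disconnected.
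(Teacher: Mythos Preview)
Your proposal is correct and follows essentially the same approach as the paper: the same reduction from \textsc{Vertex Cover} on tripartite graphs for $d=3$ (with the identical path $r_A,x,y,r_B$ used in the backward direction), and the same max-flow resolution of the two-terminal case for $d\le 2$. The only cosmetic difference is that the paper first rules out instances with an edge in $G[T]$ and then uses $\alpha(G)\le 2$ to bound $|T|\le 2$, whereas you bound the number of components of $G-X$ directly to dismiss $|T|\ge 3$; both arguments are equivalent.
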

\begin{proof}
Let $(G,T,k)$ be an instance of \textsc{Node Multiway Cut}.
If $G[T]$ contains an edge then we conclude that $(G,T,k)$ is a no-instance, since we are not allowed to remove any vertex from $T$.
In what follows we assume that $G[T]$ is an independent set.
If $d\leq 2$ there are at most two terminals, so that $|T|=2$, and we can solve the problem by standard maximum flow techniques~\cite{Orlin13}.

For $d=3$, we give a reduction from the NP-complete \textsc{Vertex Cover} problem on tripartite graphs, similar to the one given in Theorem~\ref{theo:wsfvsdNP}.
Let $G=(A,B,C,E)$ be a tripartite graph where $(A,B,C)$ is the partition of $V(G)$.
We construct a graph $G'$ from $G$ by making the three independent sets $A$, $B$ and $C$ into cliques and adding three new vertices $r_A, r_B, r_C$, that are adjacent to every vertex of $A$, $B$, and $C$, respectively.
It is clear that $G'$ has independent set size $3$. We let $T=\{r_A, r_B, r_C\}$ and claim that $G$ has a vertex cover $U$ of size at most $k$ if and only if $G'$ has a set $X$ of size at most $k$ which intersects every path between the vertices of $T$.
Removing a vertex cover $U$ from $G$ results in a vertex-disjoint union of three cliques in $G'$ in which each of the vertices $r_A,r_B,r_C$ belongs to a separate clique.
Thus $X=U$ is a solution for \textsc{Node Multiway Cut} on $G'$.
For the opposite direction, observe that $X$ cannot contain any of the three vertices $r_A,r_B,r_C$.
Assume that $X$ is not a vertex cover of $G$.
Then there is an edge $\{a,b\}$ that is not covered by $X$ where $a$ and $b$ belong to different partitions of $V(G)$.
Let $r_a$ and $r_b$ be the terminal vertices of $\{r_A,r_B,r_C\}$ which are adjacent to $a$ and $b$, respectively, in $G'$.
Then it is clear that there is a path between the terminals $r_a$ and $r_b$ in $G' - X$, leading to a contradiction.
Therefore, $X$ is a vertex cover of $G$ of size at most $k$.
\end{proof}

Due to the difficulty of \textsc{Node Multiway Cut} even for the unweighted version and with a small size of independent set, we consider a relaxed variation in which we are allowed to remove terminal vertices.

\vspace*{-0.1in}
\tikzstyle{mybox} = [draw=black, fill=none, very thin,
    rectangle, rounded corners, inner sep=7pt, inner ysep=8pt]
\tikzstyle{fancytitle} =[fill=white, text=black]
\begin{figure}[!h]
\centering
\begin{tikzpicture}
\node [mybox] (box){%
  \noindent
  \begin{tabularx}{0.9\textwidth}{@{\hspace{\parindent}} l X c}
    \textit{Input:} & A (vertex-weighted) graph $G$, a set $T \subseteq V$ of \textit{terminals}, and a nonnegative integer $k$. \\[1pt]
    \textit{Task:} & Decide whether there is a set $X \subseteq V$ with $|X|\leq k$ ($w(X)\leq k$) such that any path between two different terminals intersects $X$. 
  \end{tabularx}};
\node[fancytitle, right=6pt] at (box.north west) {(\textsc{Weighted}) \textsc{Node Multiway Cut with Deletable Terminals}};
\end{tikzpicture}%
\end{figure}

\noindent Next we show that the (unweighted) \textsc{Node Multiway Cut with Deletable Terminals} problem can be solved in polynomial time on graphs of bounded independent set number, using an idea similar to the one given in Theorem~\ref{theo:sfvspoly}.

\begin{theorem}\label{theo:NMCDTP}
\textsc{Node Multiway Cut with Deletable Terminals} on graphs of independent set of size at most $d$ can be solved in time $n^{\Oh(d)}$.
\end{theorem}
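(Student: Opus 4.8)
The plan is to imitate the enumeration argument of Theorem~\ref{theo:sfvspoly}: isolate a constant-size ``profile'' of an optimal solution, guess it by brute force in $n^{\Oh(d)}$ time, and verify each guess in polynomial time. Throughout, let $(G,T,k)$ be the instance with $\alpha(G)\leq d$ and let $X$ be a minimum-size set with the property that every path between two distinct terminals meets $X$; equivalently, the surviving terminals $T\setminus X$ lie pairwise in distinct connected components of $G-X$.

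First I would record two structural facts about such a minimum $X$. Since two surviving terminals in the same component of $G-X$ would be joined by a path avoiding $X$, the set $T\setminus X$ is an independent set of $G$, hence $|T\setminus X|\leq\alpha(G)\leq d$. Second, for any $t\in T$ the set $T\setminus\{t\}$ is feasible, because $G-(T\setminus\{t\})$ contains at most the single terminal $t$; thus $|X|\leq |T|-1$ whenever $T\neq\emptyset$ (and $X=\emptyset$ is optimal when $T=\emptyset$). Combining the two, $|X\cap T|=|T|-|T\setminus X|\geq |T|-d$, and therefore $|X\setminus T|=|X|-|X\cap T|\leq (|T|-1)-(|T|-d)=d-1$; when $|T|\leq d$ these bounds hold trivially since then $|X|\leq\max\{0,|T|-1\}\leq d-1$. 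This is the exact analogue of the ``$G-S$ is a trivial $S$-forest'' step used in Theorem~\ref{theo:sfvspoly}, and I expect it to be the only genuinely non-syntactic ingredient of the proof.

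Next I would turn the bounds into an algorithm. An optimal $X$ is completely described by the pair $(T\setminus X,\ X\setminus T)$, where the first component is a subset of $T$ of size at most $d$ and the second a subset of $V\setminus T$ of size at most $d-1$; indeed $X=(T\setminus(T\setminus X))\cup(X\setminus T)$ and the two pieces are disjoint. So I would iterate over all pairs $(T',X')$ with $T'\subseteq T$, $|T'|\leq d$, and $X'\subseteq V\setminus T$, $|X'|\leq d-1$; for each, put $X:=(T\setminus T')\cup X'$, compute the connected components of $G-X$, and test whether the terminals of $T\setminus X=T'$ lie in pairwise distinct components. Among the pairs passing this test I would return one minimizing $|X|=|T|-|T'|+|X'|$. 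Feasibility of the returned set is exactly what the component test checks, and by the two structural facts the pair associated with any optimal solution is among those enumerated, so the output is optimal.

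For the running time there are $n^{\Oh(d)}$ choices of $T'$ and $n^{\Oh(d)}$ choices of $X'$, while computing components and running the distinctness test costs $\Oh(n+m)$ per pair, giving $n^{\Oh(d)}$ overall. I would phrase the result for the unweighted problem only, since the ``$|X|\leq|T|-1$'' step relies on counting vertices rather than weights — a terminal of large weight could make $T\setminus\{t\}$ an expensive solution, which is why the weighted variant is treated separately in the paper.
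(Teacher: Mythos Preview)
Your proof is correct and follows essentially the same enumeration strategy as the paper: both guess the at most $d$ surviving terminals $T'=T\setminus X$ (an independent set) and the small non-terminal part of the solution, then verify by a component test. Your presentation is in fact slightly tighter---you derive the bound $|X\setminus T|\leq d-1$ directly from $|X|\leq|T|-1$ and $|T\setminus X|\leq d$, whereas the paper arrives at the same enumeration via a case split on $|T|\leq d$ versus $|T|>d$---but the underlying idea and running time are identical.
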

\begin{proof}
Let $(G,T,k)$ be an instance of \textsc{Node Multiway Cut with Deletable Terminals} where $G$ is a graph having independent set size at most $d$.
Observe that every solution $X$ has size at most $|T|$.
Assume first that $|T|\leq d$.
Then we can enumerate all subsets having at most $|T|$ vertices in time $n^{\Oh(|T|)}$ and pick the smallest subset that separates all terminals.
Thus in time $n^{\Oh(d)}$ we output a valid solution $X$, if it exists.

Next assume that $d < |T|$.
We consider the graph $G[T]$. As an induced subgraph of $G$, $G[T]$ has independent set size at most $d$.
Thus $G[T]$ contains at least one edge.
If both endpoints of an edge in $G[T]$ do not belong to solution $X$, then there is a path between terminal vertices.
This means that there is a minimum vertex cover $U$ of $G[T]$ such that $U \subseteq X$.
To compute such a set $U$ we enumerate all independent sets $T'\subseteq T$ of size at most $d$ in time $|T|^{\Oh(d)}$ and construct $U = T \setminus T'$.
For each constructed $U$ we consider the graph $G'=G-U$ with terminals $T'$.
Since $T'$ is an independent set in $G'$, we know that $|T'|\leq d$.
Thus in time $n^{\Oh(|T'|)}$ we can compute a set $X'$ of minimum size such that all terminals of $G'-X'$ are separated.
Therefore the total running time is bounded by $|T|^{\Oh(d)}\cdot n^{\Oh(|T'|)}$ which is bounded by $n^{\Oh(d)}$, because $|T|\leq n$ and $|T'| \leq d$, and this gives the claimed running time.
\end{proof}

Let us also stress that we can hardly avoid the dependence of the exponent in the running time given in Theorem~\ref{theo:NMCDTP}.
This comes from the fact that \textsc{Node Multiway Cut with Deletable Terminals} with $T=V(G)$ is equivalent to asking whether
the graph contains a maximum independent set.
That is, we have to solve the \textsc{Independent Set} which is known to be W[1]-hard parameterized by the size of the independent set \cite{DF13}.

Regarding the node-weighted variant of \textsc{Node Multiway Cut with Deletable Terminals}, we can provide a dichotomy result with respect to the size $d$ of a maximum independent set.
In fact, for $d \leq 2$ we can invoke the algorithm for the \textsc{Weighted Subset Feedback Vertex Set} given in Theorem~\ref{theo:wsfvspoly}.
Moreover, due to its close connection to the \textsc{Node Multiway Cut}, for $d>2$ we can assign appropriate weights to the terminals in a way that they become undeletable.
Both ideas are explained in the proof of the following result.

\begin{theorem}\label{theo:weightedNMCDTP}
Let $G$ be a graph of independent set of size at most $d$.
If $d\leq 2$ then \textsc{Weighted Node Multiway Cut with Deletable Terminals} can be solved on time $n^{\Oh(1)}$.
Otherwise, \textsc{Weighted Node Multiway Cut with Deletable Terminals} is NP-complete on graphs of independent set of size at most $3$.
\end{theorem}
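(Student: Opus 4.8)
The statement has two parts: a polynomial-time algorithm for $d \leq 2$ and NP-completeness for $d = 3$. The plan is to handle each part by reduction to a problem already settled earlier in the paper.

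For the case $d \leq 2$, the key observation is that \textsc{Weighted Node Multiway Cut with Deletable Terminals} is essentially a special case of \textsc{Weighted Subset Feedback Vertex Set}. First I would note that since $\alpha(G) \leq 2$, the complement of $G$ is triangle-free; in particular $G[T]$ has at most two vertices in any independent set. If $G[T]$ contains an edge, then any solution must delete at least one endpoint of every such edge, and after deleting a minimum-weight vertex cover $U$ of $G[T]$ the remaining terminals $T' = T \setminus U$ form an independent set with $|T'| \leq 2$; this reduces to a weighted minimum $(s,t)$-vertex-cut, solvable by standard max-flow techniques \cite{Orlin13} (with the caveat that we must enumerate the two independent sets of $G[T]$, which costs only $n^{\Oh(1)}$ since $|T| \leq n$). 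When $|T'| \leq 1$ the problem is trivial. The alternative, cleaner route stated in the paragraph preceding the theorem is to invoke Theorem~\ref{theo:wsfvspoly} directly: one builds an auxiliary graph $G^{+}$ from $G$ by adding a new vertex $s$ adjacent to all of $T$, setting $S = \{s\}$, and assigning $s$ a huge weight (say $w(V(G)) + 1$) so that $s$ is never deleted; then any $S$-cycle through $s$ corresponds exactly to a path between two terminals together with $s$, so minimum-weight $S$-forests of $(G^{+}, \{s\})$ correspond to optimal solutions of the multiway cut instance. Since $\alpha(G) \leq 2$ forces $\alpha(G^{+}) \leq 3$, Theorem~\ref{theo:wsfvspoly} applies and runs in $n^{\Oh(1)}$ time. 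I would present this second route since it is the one advertised.

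For the case $d = 3$, the plan is to give a reduction from \textsc{Vertex Cover} on tripartite graphs, essentially identical in shape to the one in Theorem~\ref{theo:wsfvsdNP} and Theorem~\ref{theo:NMCNP}. Given a tripartite graph $G = (A, B, C, E)$, build $G'$ by turning each of $A, B, C$ into a clique, adding terminals $r_A, r_B, r_C$ adjacent to all of $A$, $B$, $C$ respectively, giving every vertex of $A \cup B \cup C$ weight one, and giving each $r_X$ weight $n$ (large enough that no optimal solution deletes a terminal, since $A \cup B \cup C$ minus one vertex is already a feasible solution of weight $n - 1$). Then $\alpha(G') = 3$ because $G'$ is a vertex-disjoint union of three cliques $A \cup \{r_A\}$, $B \cup \{r_B\}$, $C \cup \{r_C\}$. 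The equivalence is: a vertex cover $U$ of $G$ of weight at most $k < n$ yields the deletion set $X = U$, after which $G' - X$ splits the three terminals into three different components (each $r_X$ sitting in its own clique, with no edges between cliques surviving); conversely, a solution $X$ of weight at most $k < n$ must avoid the weight-$n$ terminals, and if some edge $\{a, b\}$ of $G$ with $a \in A$, $b \in B$ were left uncovered by $X$ then $r_A, a, b, r_B$ would be a surviving $r_A$–$r_B$ path, a contradiction, so $X$ is a vertex cover of $G$. Membership in NP is immediate.

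The main obstacle is essentially bookkeeping rather than conceptual difficulty: one has to be careful in the $d \leq 2$ case that the reduction to \textsc{Weighted Subset Feedback Vertex Set} correctly preserves \emph{all} terminal-separating paths as $S$-cycles and does not introduce spurious ones --- in particular, that a single shared vertex $s$ of large weight cannot itself be profitably deleted and that every $s$-containing cycle in $G^{+}$ really does pass through two distinct terminals of $T$. A minor subtlety is the edge case $|T| \leq 1$, where the instance is trivially feasible with $X = \emptyset$ and must be excluded from the max-flow / SFVS machinery; this is dispatched in one line. For $d = 3$ the only point requiring a sentence of care is ensuring the terminal weights are large enough to be effectively undeletable while keeping the target budget $k < n$, which is exactly the device already used in Theorem~\ref{theo:wsfvsdNP}.
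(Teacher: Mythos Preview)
Your proposal is correct and, for the case $d\leq 2$, identical to the paper's argument: add a high-weight vertex $s$ adjacent to all of $T$, set $S=\{s\}$, observe $\alpha(G^{+})\leq 3$, and invoke Theorem~\ref{theo:wsfvspoly}.

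For the case $d=3$ there is a small but genuine difference in route. You give a direct reduction from \textsc{Vertex Cover} on tripartite graphs, rebuilding the three-clique gadget with heavy terminals $r_A,r_B,r_C$. The paper instead reduces from the \emph{unweighted} \textsc{Node Multiway Cut}, which was already shown NP-complete for $\alpha\leq 3$ in Theorem~\ref{theo:NMCNP}: it simply takes any such instance, assigns weight $n$ to each terminal and weight $1$ to every other vertex, and observes that this forces terminals to be undeletable, making the weighted-with-deletable-terminals problem coincide with the unweighted-without-deletable-terminals problem. Your argument is the composition of these two reductions unrolled into one; the paper's version is shorter and more modular since it reuses Theorem~\ref{theo:NMCNP} as a black box, while yours is self-contained. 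Both are correct, and the underlying construction is the same.
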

\begin{proof}
Let $(G,T,k)$ be an instance of \textsc{Weighted Node Multiway Cut with Deletable Terminals}.
Assume first that $d\leq 2$.
We create an equivalent instance for the \textsc{Weighted Subset Feedback Vertex Set} problem.
Starting from $G$, we obtain a new graph $G'$ by adding a vertex $s$ that is adjacent to all terminals of $T$.
Since we only added one vertex, $G'$ has a maximum independent set of size at most $3$.
We let $S=\{s\}$ and assign a large weight to $s$ that is equal to the sum of the weights of all vertices in $G$.
Next we claim that any solution for the \textsc{Weighted Subset Feedback Vertex Set} on $G'$ consists a valid solution for the \textsc{Weighted Node Multiway Cut with Deletable Terminals}.
Notice that a solution in $G'$ cannot contain the new vertex $s$ due its assigned weight.
Also observe that any cycle in $G'$ passing through $s$ corresponds to a path in $G$ connecting two terminals of $T$.
Thus by running the algorithm of Theorem~\ref{theo:wsfvspoly} on $G'$, we obtain a solution for the \textsc{Weighted Node Multiway Cut with Deletable Terminals} problem in time  $n^{\Oh(1)}$.

Now assume that $d=3$.
Given an instance $(G,T,k)$ for the (unweighted) \textsc{Node Multiway Cut}, we assign weight $n$ to every terminal of $T$ and unary weight to every other vertex.
Thus the solutions for both problems contain only non-terminal vertices which implies that they are equivalent.
Therefore the NP-completeness of \textsc{Weighted Node Multiway Cut with Deletable Terminals} follows, since the (unweighted) \textsc{Node Multiway Cut} is NP-complete on graphs of independent set size at most three by Theorem~\ref{theo:NMCNP}.
\end{proof}

\section{Concluding Remarks}\label{sec:concl}
We conclude with a few open problems.
Despite the fact that the \textsc{Weighted Subset Feedback Vertex Set} is NP-complete on graphs with bounded independent set number,
it is still interesting to settle the complexity of \textsc{Subset Feedback Vertex Set} on graphs of maximum induced matching width by extending the approach given in \cite{JKTT18}.
Towards such a direction, Dilworth-$k$ graphs seem a possible candidate for clarifying the complexity status of \textsc{Subset Feedback Vertex Set} (for an exposition of such parameters, see for e.g. \cite{Vatschelle2012}).
Moreover, \textsc{Feedback Vertex Set} is known to be polynomially-time solvable on cocomparability graphs \cite{fvs:cocomp:liang:1997}, and, more generally, on AT-free graphs \cite{KratschMT08}.
To our knowledge, \textsc{Subset Feedback Vertex Set} has not been studied on such graphs, besides the existence of a fast exponential-time algorithm for the unweighted variant of the problem \cite{ChitnisFLMRS17}.
Concerning such an approach, our results indicate that it is natural and compelling to settle first the unweighted \textsc{Subset Feedback Vertex Set} problem.
Furthermore, Theorem~\ref{theo:NMCNP} shows that \textsc{Node Multiway Cut} remains NP-complete on graphs having maximum induced matching $3$.
However, on graphs of bounded maximum induced matching the complexity of \textsc{Node Multiway Cut with Deletable Terminals} is still unknown.

\bibliography{subsetFVS_classes}

\clearpage
\appendix

\end{document}